\newcommand{\sketch}[1]{\iftoggle{showsketches}{\textcolor{gray}{#1}}{}}
\newcommand{\cconj}[1]{\overline{#1}}
\DeclareMathOperator*{\Cov}{Cov}
\DeclareMathOperator*{\Var}{Var}
\DeclareMathOperator*{\lb}{ld}
\DeclareMathOperator*{\diag}{diag}
\DeclareMathOperator*{\hypgeom}{HypG}
\newcommand{\mangle}{\beta}
\newcommand{\pangle}{\gamma}
\newcommand{\NP}{\textup{\textsc{\textbf{NP}}}\xspace}
\newcommand{\QAOA}{\textsc{qaoa}\xspace}
\newcommand{\NISQ}{\textsc{nisq}\xspace}
\newcommand{\SAT}{\textsc{sat}\xspace}
\newcommand{\ourQAOA}{non-iterative \QAOA}
\newcommand{\QUBO}{\textsc{qubo}\xspace}
\newcommand{\QUBOs}{\textsc{qubo}s\xspace}
\newcommand{\VQC}{\textsc{vqc}\xspace}
\newcommand{\VQCs}{\textsc{vqc}s\xspace}
\newcommand{\maxcut}{Max-Cut\xspace}
\newcommand{\ind}{\chi}
\newcommand{\kclique}{$k$-\textsc{Clique}\xspace}
\newcommand{\maxclique}{$\max$-\textsc{Clique}\xspace}
\newcommand{\qrfactoring}{$qr$-\textsc{Factoring}\xspace}
\newcommand{\Ccliques}{C_{\text{cliques}}}
\newcommand{\Cdhk}{C_{d_H = k}}
\newcommand{\sqboxsquare}[1]{%
    \begin{tikzpicture}
        \fill[#1] (0,0) rectangle (1.2ex, 1.2ex);
    \end{tikzpicture}
}
\newcommand{\sqbox}[1]{%
    \raisebox{0.15em}{\begin{tikzpicture}
        \fill[#1] (0,0.25em) rectangle (2.4ex, 0.5em);
    \end{tikzpicture}}
}
\definecolor{lfd1}{HTML}{FFFFFF} 
\definecolor{lfd2}{HTML}{E69F00}
\definecolor{lfd3}{HTML}{999999}
\definecolor{lfd4}{HTML}{009371}
\definecolor{lfdblue}{HTML}{1f78b4}
\tikzset{group/.style={rectangle, sharp corners,
                       draw=none, fill=#1, nearly transparent}}
\tikzset{compute/.style={fill=lfd2!40}}
\newcommand{\WIDTH}{2.55cm}
\newcommand{\HEIGHT}{1.65cm}
\tikzset{box/.style={rectangle, draw, fill=#1,
                     align=center, blur shadow},
         box/.default=white}
\tikzset{ibox/.style={box, minimum width=\WIDTH,
                     minimum height=\HEIGHT},
         box/.default=white}
\tikzset{varStyle/.style={draw, thin, solid, double}}
\tikzset{dataStyle/.style={draw, semithick, dashed}}
\tikzset{label/.style={fill=black!50, inner sep=0pt, text=white}}
\newcommand{\iterationnf}{\(\textcolor{lfd4}{\circlearrowright}\)}
\newcommand{\ie}{\emph{i.e.}\xspace}
\newcommand{\eg}{\emph{e.g.}\xspace}
\newcommand{\etal}{\emph{et al.}\xspace}
\newcommand{\doirepro}{\href{https://doi.org/10.5281/zenodo.13286242}{DOI-safe}\xspace}
\newcommand{\repro}{\href{https://github.com/lfd/qaoa-structural-approximation}{reproduction package}\xspace}
\newtheorem{theorem}{Theorem}
\newtheorem{lemma}{Lemma}
\newtheorem{definition}{Definition}
\newtheorem{remark}[theorem]{Remark}}
\begin{document}

\title{Out of the Loop: Structural Approximation of Optimisation Landscapes and non-Iterative Quantum Optimisation}

\author{Tom Krüger}
\orcid{0000-0002-1161-808X}
\affiliation{%
Technical University of Applied Sciences Regensburg,
Regensburg, Germany}
\affiliation{%
FI CODE,  Universität der Bundeswehr München,
Munich, Germany}
\email{tom.krueger@oth-regensburg.de}

\author{Wolfgang Mauerer}
\orcid{0000-0002-9765-8313}
\affiliation{%
Technical University of Applied Sciences Regensburg, Regensburg, Germany}
\affiliation{Siemens AG, Technology, Munich, Germany}
\email{wolfgang.mauerer@othr.de}

\maketitle
\begin{abstract}
The Quantum Approximate Optimisation Algorithm (\QAOA) is a widely
studied quantum-classical iterative heuristic for combinatorial
optimisation. While \QAOA targets problems in complexity class \NP, the classical optimisation procedure required in every
iteration is itself known to be \NP-hard. Still, advantage over classical approaches is suspected for certain scenarios, but nature 
and origin of its computational power are not yet satisfactorily understood.

By introducing means of efficiently and accurately approximating the \QAOA optimisation
landscape from solution space structures, we  derive a new algorithmic variant of unit-depth \QAOA for
two-level Hamiltonians (including all problems in \NP): 
Instead of performing an iterative quantum-classical computation for each input 
instance, our non-iterative method is based on a quantum circuit that 
is instance-independent, but problem-specific. It matches or outperforms unit-depth \QAOA for key combinatorial problems, despite 
reduced computational effort.

Our approach is based on proving a long-standing conjecture regarding
instance-independent structures in \QAOA. By ensuring generality, we link existing empirical observations on \QAOA parameter clustering to established approaches in 
theoretical computer science, and provide a sound foundation for understanding the link between
structural properties of solution spaces and quantum optimisation.
\end{abstract}

\section{Introduction}
With the advent of early commercially available quantum computers, interest in the field has spilled from academia to early industrial adopters, and
the hope for possible advantage or even supremacy
is manifest~\cite{Pirnay:2024,Arute:2019,montanez2024towards}. However, challenges arise not only from deficiencies of noisy, 
intermediate-scale quantum (\NISQ) hardware~\cite{Bharti:2022,greiwe:23:imperfections,Thelen:2024,blekos_review_2024}, but quantum algorithmic theory in general lags behind the classical case. While fundamental complexity-theoretic boundaries have long been established~\cite{Adleman:1997,Bernstein:1997,Fortnow:1999}, a more precise understanding of concrete algorithmic building blocks and how to construct them is required~\cite{Aaronson:2015,Montanaro:2016,Shao:2019}.

How to systematically construct quantum algorithms is a multi-faceted, highly non-trivial task that remains essentially unsolved. Instead, heuristics like variational quantum circuits (\VQC)~\cite{Cerezo:2021} have emerged as popular alternatives. Originally intended to extract useful computational power from \NISQ devices despite their limitations, they may also prove relevant as resource-efficient primitives in the post-\NISQ era. They are centred around an iterative quantum-classical process that learns a parameterised form of a quantum circuit such that sampling a produced quantum state obtains a valid solution with high probability. Depending on the variational ansatz, this approach defers considerable aspects (\ie, finding an
appropriate quantum circuit that eventually implements the optimisation routine) of the task to classical components~\cite{PellowJarman:2024,FernndezPends:2022,zhou2020quantum}.

Structured forms of \VQCs, in particular the quantum approximate 
optimisation algorithm (\QAOA), enjoy popularity in prototypical 
applications~\cite{bayerstadler:21:}. \QAOA is a specific \VQC for 
combinatorial optimisation. Similar to quantum annealing, to which \QAOA 
is closely related~\cite{pelofske2024short,Bharti:2022}, it 
provides a strict framework on the quantum side. Given a 
specific problem, users only need to find a fitting problem 
representation, choose a suitable circuit depth, and pick a classical 
optimisation method. Devising such problem representations 
is not foreign to traditional computer science: In 
particular, techniques to reduce computational problems
specified using an apt formalism into
representations that are suitable for \QAOA~\cite{schmidbauer:24:reductions,cipra2000ising,lucas2014ising} are curricular knowledge~\cite{Arora:2009}.

\begin{figure*}[htbp]
    \centering
    \includegraphics{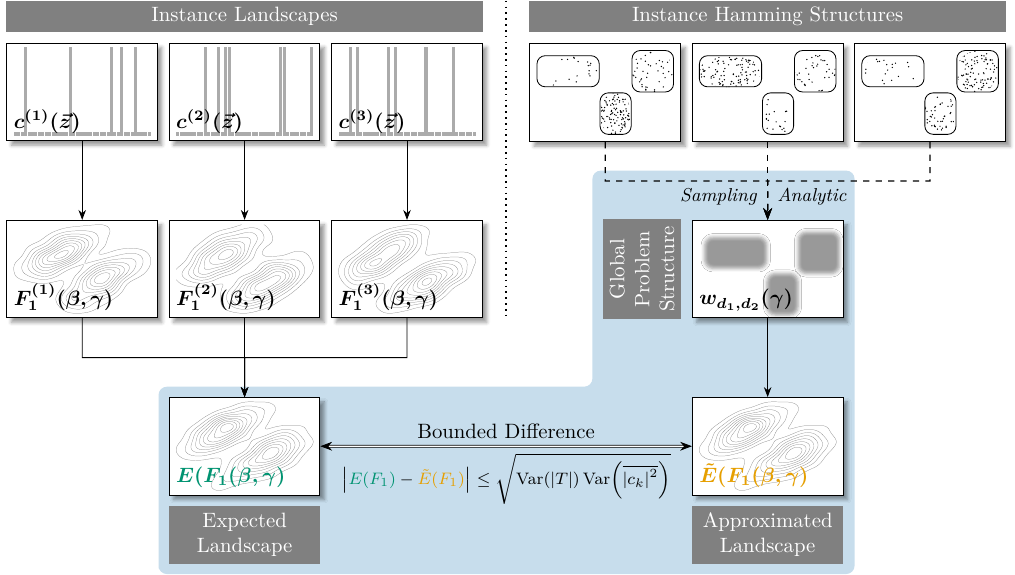}
    \caption{Overview of our main foundational contributions. \emph{Left:} Previously conjectured parameter clustering in \QAOA optimisation landscapes  \(F_{1}^{(i)}(\mangle, \pangle)\) (derived
    from a combinatorial optimisation objective function \(c^{(i)}(\vec{z})\)
    of a decision problem instance) suggests that shared macroscopic similarities exist between instances \(i\). Averaging over these reveals that the expected landscape \(E(F_{1}(\mangle, \pangle))\) is a common structure at the problem-global level. 
    \emph{Right:} Shared macroscopic features exist across all instances, based on structural properties manifest in the solution spaces. Aggregation (which may be possible analytically, but can always be performed using empirical sampling) leads to a macroscopic description of a 
    problem-global solution space, from which the expected optimisation landscape 
    can be efficiently approximated as \(\tilde{E}(F_{1}(\mangle, \pangle))\). Most importantly, we prove that the approximation has a bounded difference to the underlying exact quantity. The approximation approach (ingredients indicated
    by a blue background)  and its consequences are subject of this paper.}\label{fig:overview}
\end{figure*}

The most essential ingredient of combinatorial optimisation is, of course, 
the optimisation landscape. This paper is centred around a new theorem, detailed 
in \cref{sec:approx_thm}, that, for two level Hamiltonians (including all problems 
in \NP), allows us to approximate the expected one layer \QAOA optimisation landscape of a 
problem from existing solution space structures. Informally speaking, 
\QAOA is concerned with multiple objects: A problem (\eg, can a graph be 
partitioned into two halves such that only a certain amount of edges must be 
cut?), an instance (\eg, a specific graph), a solution space (\eg, lists of edges to cut),
and parameters that define an appropriate quantum circuit 
specific for each \emph{instance} to obtain solutions from sampling the circuit output.

Hamming distances between solutions offer an intuitive handle to discover and describe local structures in solution space. Our approach continues
a line of research~\cite{montanez2024towards,diez2024connection,
streif2019comparison,bravyi2020obstacles} using such distance information, particularly the connection to state amplitudes and their interference,
to obtain insights about the inner workings of \QAOA. By using stochastic 
methods, we establish novel means of reasoning about \QAOA landscapes on an \emph{instance 
independent}, but \emph{problem specific} level, and show that 
instance-specific quantum circuits can be replaced by one single 
problem-generic alternative, while obeying a strictly bounded  
maximal difference between approximated and exact ingredients of
the overall combinatorial optimisation process.
Our result contributes to both, foundational understanding
of \QAOA and practical implementations.

As for the fundamental aspects, we provide a rigorous mathematical proof of a long-standing empirically motivated conjecture~\cite{brandao2018fixed,streif2020training,sack2021quantum,galda2similarity,Sud:2024} that instance independent structures of a problem form the global structure of the \QAOA optimisation landscape. We combine methods from physics and computer science to ascertain this structural insight, and can use either efficient sampling techniques or an analytical approach to obtain an accurate approximation of the \QAOA optimisation landscape. In particular, we can separate the instance sampling from specific \QAOA parameters. These aspects of our contribution are visually summarised in Fig.~\ref{fig:overview}.
\newbox\varbox\newbox\databox
\setbox\varbox\hbox{\raisebox{0.15em}{\tikz{\path[varStyle] (0,0) -- (0.5,0);}}}
\setbox\databox\hbox{\raisebox{0.15em}{\tikz{\path[dataStyle] (0,0) -- (0.5,0);}}}

\begin{figure*}[htbp]
    \includegraphics{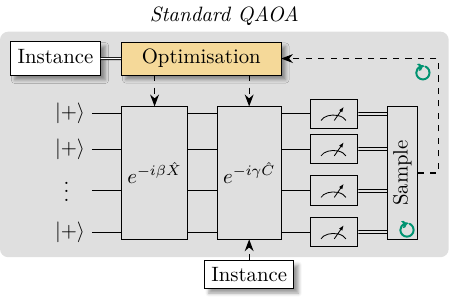}
    \hfill
    \includegraphics{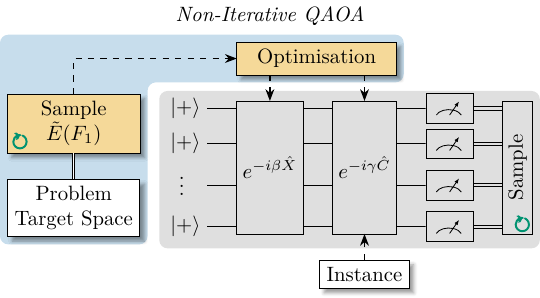}
    \caption{Overview of our main practical contributions (blue background \sqboxsquare{lfdblue, nearly transparent} indicates \emph{problem-global},
    grey background \sqboxsquare{gray!25} \emph{instance-specific} components; yellow boxes \sqboxsquare{compute} mark classical computation. Dashed lines \usebox\databox{} indicate transfer of classical data, and double lines \usebox\varbox{} denote querying a resource).
    \emph{Left:} Standard \QAOA that iteratively (symbolised by \iterationnf) determines optimal parameters (\(\mangle, \pangle\)) for every instance by repeatedly sampling a structured quantum circuit.
    \emph{Right:} Two-phase \QAOA approach introduced in this paper that first
    determines optimal parameters (\(\mangle, \pangle\)) by sampling \(\tilde{E}(F_{1}(\mangle, \pangle))\) from the problem-global target space, and then uses the derived instance-independent optimal parameters (\(\mangle, \pangle\)) to obtain instance-specific solutions by sampling from a quantum circuit that is constant for each
    instance \(\hat{C}\). As we show, \(\tilde{E}(F_{1})\) has a strictly bounded difference to the expectation value of 
    \(F_{1}\), the core quantity of interest in \QAOA.
    }\label{fig:application_overview}
\end{figure*}

In terms of practical impact, we provide a unit-depth non-iterative \QAOA version
for two-level problem Hamiltonians. Such two-level Hamiltonians are powerful enough
to efficiently encode all problems in \NP, allowing for the construction of \QAOA circutis
to solve said problems. Thus, for \NP-Problems, we can replace the standard 
iterative \QAOA heuristic comprising an interplay of quantum and classical 
components (that, notably, requires solving an \NP-hard parameter 
optimisation problem~\cite{bittel_training_2021} individually for 
\emph{every} instance) with a two-phase, non-iterative algorithm that first approximates 
the instance-independent, but problem-specific expected landscape 
followed by sampling a fixed quantum circuit, as 
illustrated in \cref{fig:application_overview}. While omitting
the outer loop has, in various ways, been suggested in prior work based on the 
aforementioned empirical observations~\cite{brandao2018fixed,streif2020training},
our approach places the idea on a sound theoretical basis.
It also introduces a new method of devising classical optimal parameters,
and equips the method with a quantitative quality criterion. We demonstrate that our
simplified variant is identical or better to standard \QAOA for several 
seminal subject problems in terms of successfully finding solutions
to combinatorial optimisation problems.

By showing that the required structures exist for all 
problems in \NP, we establish generality of our results.
Our approach therefore opens up the discussion on parameter 
clustering and optimisation landscape similarities to the vast body 
of established results in theoretical computer science on structural 
properties of problem solutions. 

The rest of this paper is organised as follows: 
\Cref{sec:foundations} first presents related work, supporting and 
motivating our ideas, and then establishes precise 
terminology for \QAOA fundamentals, particularly given the many subtly 
different variants in current use. \Cref{sec:opt_landscape} covers 
the \QAOA optimisation landscape and prepares the necessary groundwork 
for \cref{sec:approx_thm}, where the main approximation theorem is 
presented and proved. With the new theorem at hand, we discuss a list 
of five examples building on each other in \cref{sec:application}. This 
illustrates how our theorem can be used to analyse synthetic and realistic 
computational problems. The practical application of our insights is
subject of \cref{sec:utility}, where we introduce a novel non-iterative variant of \QAOA based on our insights that matches or exceeds the performance of the standard 
heuristic for a set of subject problems, yet at substantially reduced computational cost. After setting our contributions
in context with the current state of the art, and discussing potential 
further uses and future improvements in \cref{sec:discussion}, we
conclude in \cref{sec:conclusion}.

\section{Foundations}\label{sec:foundations}
\subsection{Context and Related Work}\label{sec:relatedwork}

\sketch{
Parameter concentration:
\begin{itemize}
    \item Analytical definition of parameter concentration / target state projector \cite{akshay2021parameter} \citeyear{akshay2021parameter}
    \item Instance clustering \maxcut, and instance transferability / cost landscapes (very similar to ours) of sub graphs\cite{galda2similarity} \citeyear{galda2similarity}
    \item Landscape values concentrate for fixed parameter / conjecture landscapes are instance invariant \cite{brandao2018fixed} \citeyear{brandao2018fixed}
    \item Observed parameter concentration / Training without access to a QPU \cite{streif2020training} \citeyear{streif2020training} 
\end{itemize}
classical work on solution space structures:
\begin{itemize}
    \item community structures in industrial \SAT instances: \cite{AnsoteguiGL12} \citeyear{AnsoteguiGL12}
    \item even random \SAT instances exhibit solution structures, they also used Hamming distances to capture structures: \cite{PariLYQ04} \citeyear{PariLYQ04}
    \item clustering: realistic instances often models physical or social structures / interaction likelihood depends on distance metric \cite{hogg1996refining} \citeyear{hogg1996refining}
    \item clustering of \SAT solutions / in lower alpha regions exponentially many small clusters\cite{achlioptas2011solution} \citeyear{achlioptas2011solution}
\end{itemize}
Hamming distance:
\begin{itemize}
    \item grouping amplitude contributions by Hamming distance (Ising Hamiltonian) \cite{montanez2024towards} \citeyear{montanez2024towards}
    \item Eigenspace amplitudes Bolzmann distribution ($p = 1$)\cite{diez2024connection} \citeyear{diez2024connection}
    \item Comparing \QAOA with QA and SA / one target state (p = 1) \cite{streif2019comparison} \citeyear{streif2019comparison}
    \item Local Hamiltonians highly entangled ground states results extends to low energy states in presence of symmetry\cite{bravyi2020obstacles} \citeyear{bravyi2020obstacles}
    \item the alternating operator ansatz with mixers that independently work on Hamming distance "manifolds"
\end{itemize}
}

\QAOA and variants of the algorithm have been subject of intensive 
research~\cite{galda2similarity,fernandez-pendas_study_2022, 
bravyi2020obstacles, Hadfield:2019, akshay2021parameter, 
lotshaw_empirical_2021, Weidenfeller:2022, Alam:2022, Harrigan:2021, 
Herrman:2021, Lee:2021, Medvidovi:2021, Willsch:2020, Wang:2020, 
Bengtsson:2020, Wecker:2016, Jiang:2017, Morales:2020, Lechner:2020, 
Pan:2022, Guerreschi:2019, Gogeissl:2024, Pagano:2020, 
sack2021quantum,Egger:2021,Vijendran:2024, Thelen:2024, Safi:2023, 
Wintersperger:2022, Dupont:2022, schmidbauer:24:reductions, McGeoch:2023, 
Sud:2024, Awasthi:2023,Baertschi:2020, Tate:2023,Fingar:2024,Singhal:2024, 
FernndezPends:2022,PellowJarman:2024,StilckFrana:2021,Shaydulin:2024,Ozaeta:2022}, as recently reviewed 
by Blekos~\etal~\cite{blekos_review_2024} or 
Zhou~\etal~\cite{zhou2020quantum}. Apart from improving the understanding of 
the heuristic construction, modifications to the structure of the quantum 
circuit itself (\eg, Refs.~\cite{Zhang:2017,Wang:2020,Baertschi:2020,
bravyi2020obstacles,zhou2020quantum}) aim at improving
performance especially in \NISQ scenarios. Likewise, changes to the classical 
optimisation procedure (\eg, 
Refs.~\cite{Egger:2021,Awasthi:2023,Tate:2023,Vijendran:2024,Sud:2024,
montanez2024towards,Fingar:2024,streif2020training}) have been proposed.
Given that limitations of \NISQ hardware restrict programs to shallow 
circuits, many analytical and practical studies focus on unit-depth 
\QAOA~\cite{farhi2014quantum,bravyi2020obstacles,galda2similarity,blekos_review_2024,fernandez-pendas_study_2022,lotshaw_empirical_2021,Jain:2022,sack2021quantum,Egger:2021,Vijendran:2024,Hadfield:2019}; our considerations are also based
on this commonly employed scenario. Perhaps fuelled by the possibility of empirically investigating early-stage deployments, \QAOA has been applied to a considerable variety of practical problems; see Bayerstadler~\etal~\cite{bayerstadler:21:} for a review.

Our work is particularly motivated by efforts that observe concentrations of optimal \QAOA parameters across instances, which has received substantial consideration in the literature. In \citeyear{streif2020training}, \citeauthor{streif2020training}~\cite{streif2020training} observed a clustering of optimal parameters when training \QAOA circuits for random \maxcut instances, which inspired them to propose
training \QAOA without quantum hardware access. \Citeauthor{wybo2024missing}, recently used these methods to analyse fixed parameter \QAOA landscapes for Ising formulations of the maximum cut and the maximum independent set problems \cite{wybo2024missing}. Here, they also observed a promising performance of \QAOA with fixed parameters. \citeauthor{sack2021quantum}~\cite{sack2021quantum} reported similar parameter concentrations in \citeyear{sack2021quantum} also for \maxcut instances. They provided a theoretical definition of the effect based on the closeness of optimal sets in parameter space. Their highlighting the importance of picking good initialisation values for \QAOA parameters evolved into \emph{warm start \QAOA}: Here, the focus lies on finding optimal initialisation parameters~\cite{Egger:2021}; this is obviously tightly linked to optimal parameters concentrating in certain regions. Predicting the location of those clusters would benefit warm start efforts. Following that, \citeauthor{akshay2021parameter} published a thorough analytic analysis, providing a new definition and concrete analytic insights \cite{akshay2021parameter}. They, for instance, showed that \QAOA circuit parameters concentrate as an inverse polynomial in problem size. \citeauthor{galda2similarity}~\cite{galda2similarity} demonstrated the transferability of \QAOA parameters between different \maxcut instances in \citeyear{galda2similarity}, and explained their findings with local graph properties. They also investigated optimisation landscapes of certain sub-graphs and observed stark similarities between different sub-graphs. This overlaps with earlier observations by \citeauthor{brandao2018fixed} in  \citeyear{brandao2018fixed}~\cite{brandao2018fixed} that fixed parameters of the optimisation landscape concentrate around certain values for different instances. All these findings suggest the existence of higher level problem structures shared between instances that significantly influence the shape of the underlying optimisation landscapes. 

Classical theoretical computer science has studied structural properties of problems for decades. Among the most fundamental structural observations are phase transitions in constraint satisfaction problems: At a certain point, problem instances transition from under-constrained to over-constrained. The parameter correlated with this effect depends on the problem: For the seminal problem of Boolean satisfiability (\SAT), it is known to be the ratio of numbers of clauses to the number of variables in an instance \cite{Monasson:1999}. For graph colouring, the connectivity of the underlying graph is the relevant determinant. Finding a solution is especially hard at the phase transition. While the phase transitions provide a high level description of problem hardness, more complex structural properties are known. \citeauthor{hogg1996refining} argued in \citeyear{hogg1996refining}~\cite{hogg1996refining} that real-world applications model interactions between physical or social entities. Thus, interaction likelihood strongly depends on the distance between entities based on some appropriate distance measure, leading to recurring local structures. In \citeyear{PariLYQ04}, \citeauthor{PariLYQ04}~\cite{PariLYQ04} showed that even trivially sampled random \SAT instances do possess structured solution spaces. The probability of a potential solution satisfying the \SAT formula depends on the Hamming distance to other solutions. A few years later in \citeyear{achlioptas2011solution}, \citeauthor{achlioptas2011solution}~\cite{achlioptas2011solution} discovered a clustering of the solution space of \SAT instances. They further proved that under-constrained \SAT formulas have exponentially many small clusters in their solution space. The presence of community structures in the solution space of industrial \SAT instances was demonstrated in \citeyear{AnsoteguiGL12} by \citeauthor{AnsoteguiGL12}~\cite{AnsoteguiGL12}.

\subsection{Quantum Approximate Optimisation Algorithm}
More often than not, the \emph{Quantum Approximate Optimisation Algorithm} (\QAOA) is associated with the \emph{quadratic unconstrained binary optimisation} (\QUBO) problem, which is NP-complete in its decision form and relatively straight forward to solve with \QAOA. A \QUBO problem is defined by Boolean quadratic formula of the form $\sum_{i \neq j} a_{i,j} x_i x_j + \sum_{i} a_i x_i$, where $a_{i,j},  a_i \in \mathds{R}$ are real valued weights of the Boolean variables $x_i \in \mathds{F}_2$, with $\mathds{F}_2 \coloneqq \ab\{0, 1\}$. The goal is to find a variable assignment to maximise this \QUBO formula. This can be easily formulated as a ground state problem of an Ising Hamiltonian $\sum_{i \neq j} -J_{i,j} \sigma_i^z \sigma_j^z - \sum_i h_j \sigma_i^z$, with $J_{i,j}, h_j \in \mathds{R}$ and $\sigma_i^z$ being the Pauli-Z operator on the $i$-th qubit.
Therefore, it seems reasonable to view \QAOA as a dedicated \QUBO solver. This approach is analogous to how \SAT solvers are employed in classical systems. There is a rich community of \SAT experts working on newer and better solvers, while the users on the other side can rely on the interface of abstract \SAT formulas to solve their concrete use cases without them needing to dive deep into the intricacies of Boolean satisfiability. Similarly, if we look at \QAOA as just another \QUBO solver, this takes the majority of \emph{quantum} out of quantum computing as the \QUBO formalism serves as a classical interface. This is arguably a major reason why \QUBOs have been a welcoming entry point to quantum computing, particularly for researchers 
who do not feel the need to understand details of the computational process~\cite{Kochenberger:2014,lucas2014ising}. Accompanying that, promising methods of solving \QUBO problems with means other than quantum computing have enjoyed a certain amount of attention~\cite{Aramon:2019,Henke:2023,Schoenberger:2023,Alom:2017,Seker:2022}.

While our results apply to \QUBO problems, our considerations 
are not restricted to this scenario, but
consider \QAOA as proposed by \citeauthor{farhi2014quantum} to solve general combinatorial optimisation problems \cite{farhi2014quantum}, of which \QUBOs only form a restricted subset.

\begin{definition}
  \label{def:combinatorial_optimisation}
  Let $z \in \mathds{F}_2^n$ be a $n$-bit binary string. Further $\{c_\alpha\}_{\alpha = 1}^m$ shall be a set of Boolean clauses with $c_\alpha (z) = 1$ iff $z$ satisfies $c_\alpha$ and $c_\alpha (z) = 0$ otherwise. Maximising
  \begin{equation}
    \label{eq:classical_constraint_cost_func}
    c(z) = \sum_{\alpha = 1}^m c_\alpha (z)
  \end{equation}
  is known as the Boolean constraint optimisation problem.
\end{definition}

We can bring \cref{def:combinatorial_optimisation} to the quantum world by defining a corresponding basis state vector $\ket|z>$ for each bit string $z \in \mathds{F}_2^n$. The obvious choice for $\ket|z>$ is the computational basis vector $\ket|z> \in \{\ket|0>, \ket|1>\}^{\otimes n}$ encoded by $z$. Then we map the clause values $c_\alpha(z)$ to eigenvalues of quantum operators $C_\alpha$ representing the clauses. For a one to one mapping, we get a projector $C_\alpha$ per clause $c_\alpha$ that projects onto the subspace spanned by all states representing satisfying assignments of $c_\alpha$. With Hermitian operators being closed under addition, we have that $C = \sum_{\alpha = 1}^m C_\alpha$ is itself a Hermitian operator. This allows us to define the Hamiltonian time evolution operator $e^{-i \pangle C}$. Hamiltonian $C$ describes the energy of a quantum system, and energy levels are eigenvalues of $C$. The solution $z$ of \cref{def:combinatorial_optimisation} maximises \cref{eq:classical_constraint_cost_func} and thus the corresponding eigenstate $\ket|z>$ of $C$ has eigenvalue $\lambda_\text{max} = \text{max} \:\: \sigma (C)$. Here, $\sigma\ab(C)$ is the set of eigenvalues of $C$. From the Hamiltonian point of view, solving the combinatorial optimisation problem is equivalent to finding a state with maximal energy of a system described by the Hamiltonian $C$. \citeauthor{farhi2014quantum} came up with \QAOA by trotterising an interpolated Hamiltonian time evolution from an easy to prepare maximal energy state of a system described by a simple Hamiltonian to the state of maximal energy of the system described by the constraint Hamiltonian $C$, in which the problem structure of \cref{def:combinatorial_optimisation} is encoded. 

\begin{definition}[\QAOA circuit as in Ref.~\cite{farhi2014quantum}]
  \label{def:qaoa_circuit}
    We consider a constraint Hamiltonian $C$ implementing the constraint cost function \cref{eq:classical_constraint_cost_func} for $\bm{z} \in \mathds{F}_2^n$ and a mixer Hamiltonian $X = \sum_{j = 1}^n \sigma_j^x$, with $\sigma_i^x = \mathds{1}^{\otimes i - 1} \otimes \sigma^x \otimes \mathds{1}^{n - i} $. Then, the \QAOA circuit
  \begin{equation}
    \label{eq:qaoa_circuit}
      U_p \ab(\bm{\mangle}, \bm{\pangle}) = e^{-i \mangle_p X} e^{-i \pangle_p C} \cdots e^{-i \mangle_1 X} e^{-i \pangle_1 C}
  \end{equation}
  produces the state
  \begin{equation}
    \label{eq:qaoa_state}
    \ket|\bm{\mangle}, \bm{\pangle}>_p = U_p \ab(\bm{\mangle}, \bm{\pangle}) \ket|+>^{\otimes n}
  \end{equation}
  with real angles $\bm{\mangle}, \bm{\pangle} \in \mathds{R}^p$, $p \in \mathds{N}$.
\end{definition}

\Cref{def:qaoa_circuit} actually defines a parameterised family of circuits $\ab\{U_p \ab(\bm{\mangle}, \bm{\pangle})\}_{\bm{\mangle}, \bm{\pangle}, p}$. For the Hamiltonian implementation of \cref{eq:classical_constraint_cost_func}, the evaluation $c(\bm{z})$ can be performed by calculating the expectation value $\braket<\bm{z}|C|\bm{z}>$. \citeauthor{farhi2014quantum} showed that $\lim_{p \to \infty} \bra<\bm{\mangle}, \bm{\pangle}|_p C \ket|\bm{\mangle}, \bm{\pangle}>_p = \max_{\bm{z}} c(\bm{z})$. This lets one define an algorithm to approximate the combinatorial optimisation problem with the circuit defined in \cref{eq:qaoa_circuit}.

\begin{definition}[\QAOA]
  \label{def:qaoa}
    Consider a combinatorial optimisation problem with constraint cost function $c(\bm{z})$ and $\bm{z} \in \mathds{F}_2^n$. For a fixed $p \in \mathds{N}$, choose a set of angles $\bm{\mangle}, \bm{\pangle} \in \mathds{R}^p$ that maximises the \QAOA cost function 
  \begin{equation}
      F_p(\bm{\mangle}, \bm{\pangle}) = \bra<\bm{\mangle}, \bm{\pangle}|_p C \ket|\bm{\mangle}, \bm{\pangle}>_p
  \end{equation}
    Then construct the circuit $U_p\ab(\bm{\mangle}, \bm{\pangle})$, with which the state $\ket|\bm{\mangle}, \bm{\pangle}>_p$ will be prepared and measured in the computational basis to produce a binary string $\bm{z} \in \mathds{F}_2^n$. Repeat this sampling step $m$ times with the same circuit to get a binary string that is close to $\max_{\bm{z}} c(\bm{z})$ with high probability.
\end{definition}

Strictly speaking, \cref{def:qaoa} defines a heuristic and not an algorithm. The process of finding optimal angles $\bm{\mangle}, \bm{\pangle} \in \mathds{R}^p$ is not further specified and open to interpretation. As shown by \citeauthor{farhi2014quantum}~\cite{farhi2014quantum}, $F_p$ can be simplified for specific problems---they 
consider \maxcut on 3-regular graphs---which allows for finding an efficient classical evaluation of $F_p$. It would also be feasible to evaluate $F_p$ on quantum hardware. The possible parameter optimisation methods are plentiful~\cite{FernndezPends:2022,PellowJarman:2024,blekos_review_2024}, and their
impact is subject to ongoing research.

Even for $p=1$, the parameter optimisation problem of $F_1 \ab(\mangle, \pangle)$ is \NP-hard~\cite{bittel_training_2021}. We restrict our considerations to a single layer, as is common practice~\cite{farhi2014quantum,bravyi2020obstacles,galda2similarity,blekos_review_2024,fernandez-pendas_study_2022,lotshaw_empirical_2021,Jain:2022,sack2021quantum,Egger:2021,Vijendran:2024,Hadfield:2019}. For the sake of simplicity, we also focus on constraint Hamiltonians with two-level eigenspectra. This avoids some effort in notation for the following 
theorems, but still allows us to solve problems in \NP. Note that the structural properties we prove below also exist for constraint Hamiltonians $C$ with more than two distinct eigenvalues. Also, two-level constraint Hamiltonians include all decision problems with classical proofs $\bm{z} \in \mathds{F}_2^n$, most notably the complete class of \NP. Despite the restriction to $\ab|\sigma(C)| = 2$ and \(p=1\), our 
setting therefore covers a large body of non-trivial, interesting computational 
problems. 

Before we proceed with our analysis, let us fix terminology
regarding \QAOA. The unitary gates $e^{-i \pangle_i C}$ defined by the Hamiltonian time evolution of $C$ are usually called phase separation gates or just phase separators.  As becomes clear in Eq.~\ref{eq:psep} 
below, it separates the solution space from the search space by a 
complex phase, and hence earns its name.
The term $e^{-i \mangle_i X}$ is usually called mixer, and $X$ is
the mixer Hamiltonian. Together, a phase separator and mixer pair forms a layer $e^{-i \mangle_i X} e^{-i \pangle_i C}$. In this case we speak of the $i$-th layer of a $p$-layer \QAOA circuit $U_p(\bm{\mangle}, \bm{\pangle})$.

\section{The Optimisation Landscape}\label{sec:opt_landscape}

Now we want to take a look at the optimisation landscape induced by the decision problem derived from \cref{def:combinatorial_optimisation}, which asks whether there exists an assignment $\bm{z} \in \mathds{F}_2^n$ satisfying all clauses---or, equivalently, if $c(\bm{z}) = \prod_{\alpha} c_\alpha(\bm{z}) = 1$. This directly translates to the Hamiltonian implementation $C = \prod_\alpha C_\alpha$ which, as a product of projectors, remains a projector itself. Thus, the eigenspectrum of $C$ is $\sigma(C) = \{0,1\}$. Since $C$ is Hermitian, there exists a unitary diagonalisation $C = U \begin{psmallmatrix} \mathds{1} &  \\ & \bm{0}  \end{psmallmatrix} U^\dag$. Note that we can collect all eigenvalues $\lambda = 1$ in the upper left block of the diagonal matrix by simply applying a permutation operator, which we can subsume into the unitaries $U$ and $U^\dagger$. Using the power series expansion of the exponential function, we see that the exponentiation can be passed through the diagonalisation. Let $H$ be a diagonalisable operator with $H = U D U^\dagger$ with $D = \diag (d_1, \dots, d_n)$ and unitary $U$. Then using the power series expansion of the exponential function, we get $e^H = \sum_{k = 0}^\infty \frac{1}{k!} (U D U^\dagger)^k$. Since $U$ is a unitary operator, the inner contributions $U^\dagger U=\bm{1}$ vanish in the product, and $(U D U^\dagger)^k = U D^k U^\dagger$. It follows that 
\begin{equation}
e^H = \sum_{k = 0}^\infty \frac{1}{k!} U D^k U = U \ab(\sum_{k = 0}^\infty \frac{1}{k!} D^k) U^\dagger. 
\end{equation}
Since $D$ is diagonal, we find that
\begin{equation}
\begin{aligned}
    \sum_{k = 0}^\infty D^k &= \diag \ab(\sum_{k = 0}^\infty d_1^k, \dots, \sum_{k = 0}^\infty d_n^k) \\
                            &= \diag(e^{d_1}, \dots, e^{d_n}).
\end{aligned}
\end{equation}
Consequently, we can express the phase separator by \(e^{-i \pangle C} = U \begin{psmallmatrix} e^{-i \pangle} \mathds{1} & \\ & \mathds{1} \end{psmallmatrix} U^\dag\).
Let $\ket|z>$ be a computational basis state. Then
\(\begin{psmallmatrix} e^{-i \pangle} \mathds{1} & \\ & \mathds{1} \end{psmallmatrix} U^\dag \ket|z> = e^{-i \pangle} U^\dag \ket|z> = U^\dag e^{-i \pangle} \ket|z>\) iff $\ket|z>$ has eigenvalue 1 under $C$, and \(\begin{psmallmatrix} e^{-i \pangle} \mathds{1} & \\ & \mathds{1} \end{psmallmatrix} U^\dag \ket|z> = U^\dag \ket|z>  \) otherwise.
If we apply the full phase separator gate on the state, we obtain \(e^{-i \pangle C} \ket|z> = U \begin{psmallmatrix} e^{-i \pangle} \mathds{1} & \\ & \mathds{1} \end{psmallmatrix} U^\dag \ket|z> = UU^\dag e^{i\pangle}\ket|\phi> = e^{i\pangle}\ket|z>\) iff $\ket|z>$ has eigenvalue 1 under $C$, and \(UU^\dag \ket|z> = \ket|z>\) otherwise.
We conclude that $e^{-i \pangle C}$ adds a global phase to a computational basis state $\ket|z>$ if and only if $\ket|z>$ has the eigenvalue 1 under C, and leaves the state invariant
otherwise. Applying $e^{-i \pangle C}$ to an arbitrary superposition of computational basis states $\ket|\psi> = \sum_{z = 0}^{2^n -1} \omega_z \ket|z>$ can be characterised by

\begin{equation}
  e^{-i \pangle C} \ket|\psi> = e^{-i \pangle} \sum_{\ket|z> \in T} \omega_z \ket|z> + \sum_{\ket|z> \notin T} \omega_z \ket|z>.\label{eq:psep}
\end{equation}
Here, $T = \ab\{\ket|z> \mid c(z) = 1\}$ is the target space of $C$, which directly maps to the solution space of \cref{eq:classical_constraint_cost_func}. At this point, the amplitude symmetry is broken in the \QAOA circuit, which allows
for establishing interference effects whose pattern
can be controlled by $\pangle$ and $\mangle$. This, eventually, allows us
to benefit from quantum effects in the
computational process.

The angle parameters $\mangle$ also have an interesting effect on state amplitudes. As we will shortly show in detail, solely the Hamming distances between states and the target/non-target partition of the state space suffice to describe the inner workings of \QAOA circuits for decision problems. However, we need to analyse the 
effect of mixer layers before we can commence to proving this
statement.

\begin{lemma}[Projector version of \cite{diez2024connection}]
  \label{thm:mixer_amplitudes_f}
    Let $X = \sum_{j = 1}^n \sigma_j^x$ be the $n$-qubit mixer Hamiltonian. The effect of $e^{-i \mangle X}$ on an arbitrary basis state $\ket|z> \in \mathds{F}_2^n$ can be characterised by
  \begin{equation}
    e^{-i \mangle X} \ket|z> = \sum_{k \in \mathds{F}_2^n} f(\mangle, z, k) \ket|k>
  \end{equation}
  where $f$ is defined as 
  \begin{equation}
    \label{eq:f_with_states}
    f(\mangle, z, k) \coloneqq (\cos \mangle)^{n - d_H (z,k)} (-i \sin \mangle)^{d_H (z,k)},
  \end{equation}
  and $d_H(z,k)$ is the Hamming distance of $z$ and $k$.
\end{lemma}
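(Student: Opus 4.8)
The plan is to exploit the tensor-product structure of the mixer Hamiltonian. Since $X = \sum_{j=1}^n \sigma_j^x$ is a sum of single-qubit operators acting on distinct tensor factors, the summands commute pairwise, $[\sigma_i^x, \sigma_j^x] = 0$ for $i \neq j$. This lets me split the exponential of the sum into a product of commuting single-qubit exponentials,
\[
  e^{-i\mangle X} = \prod_{j=1}^n e^{-i\mangle \sigma_j^x},
\]
each of which acts non-trivially only on the $j$-th qubit.

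Next I would evaluate the single-qubit factor. Using $(\sigma^x)^2 = \mathds{1}$ and grouping the even and odd terms in the power series of the exponential, I obtain the standard rotation identity $e^{-i\mangle\sigma^x} = \cos\mangle\,\mathds{1} - i\sin\mangle\,\sigma^x$. Applied to a single-qubit computational basis state this yields $\cos\mangle\,\ket|z_j> - i\sin\mangle\,\ket|\bar z_j>$, where $\bar z_j$ denotes the flipped bit. Writing $\ket|z> = \ket|z_1> \otimes \cdots \otimes \ket|z_n>$ and applying the product of single-qubit operators factor by factor, I get
\[
  e^{-i\mangle X}\ket|z> = \bigotimes_{j=1}^n \left(\cos\mangle\,\ket|z_j> - i\sin\mangle\,\ket|\bar z_j>\right).
\]

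Then I would expand this tensor product. Each term in the expansion corresponds to choosing, independently on each qubit, whether to keep the bit (contributing a factor $\cos\mangle$) or flip it (contributing a factor $-i\sin\mangle$). A choice of a subset $S \subseteq \{1,\dots,n\}$ of flipped positions produces exactly the basis state $\ket|k>$ with $k$ differing from $z$ precisely on $S$, so that $\abs{S} = d_H(z,k)$, carrying coefficient $(\cos\mangle)^{n-\abs{S}}(-i\sin\mangle)^{\abs{S}}$. Because the map $S \mapsto k$ is a bijection between subsets of $\{1,\dots,n\}$ and elements of $\mathds{F}_2^n$, summing over all $S$ is the same as summing over all $k$, and each $\ket|k>$ receives exactly the coefficient $f(\mangle, z, k) = (\cos\mangle)^{n-d_H(z,k)}(-i\sin\mangle)^{d_H(z,k)}$ claimed.

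The argument is essentially self-contained and presents no genuine obstacle; the only point demanding care is the combinatorial bookkeeping in the final step, namely verifying that the positions on which $z$ and $k$ differ are precisely the flipped qubits, so that the exponent counting the factors of $-i\sin\mangle$ equals the Hamming distance itself rather than, say, its complement. Everything else reduces to the commutativity argument and the single-qubit rotation formula.
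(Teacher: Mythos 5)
Your proof is correct and follows essentially the same route as the paper's: reduce the mixer to independent single-qubit rotations, write the result as a tensor product $\bigotimes_{l=1}^n \ab(\cos\mangle \ket|z_l> - i\sin\mangle \ket|z_l \oplus 1>)$, and expand it combinatorially so that the number of flipped positions equals $d_H(z,k)$. The only difference is that you spell out the commutativity of the $\sigma_j^x$ terms and the bijection between flip-subsets and basis states, which the paper leaves implicit.
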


\begin{proof}
  Given that $X$ is a 1-local Hamiltonian, we can analyse its time evolution by only considering its effect on single qubits. Let us look at $n=1$ first: In this case, $X = \sigma^x$, and $e^{-i \mangle X} \ket|z> = \cos \mangle \ket|z> - i \sin \mangle \ket|z \oplus 1>$ for $z \in \mathds{F}_2$, which performs a bit flip with probability $\ab(\sin \mangle)^2$. For arbitrary $n \in \mathds{N}$, this generalises to
\begin{equation}
  \label{eq:exp_theta_X_outer_tensor}
  e^{-i \mangle X} \ket|z> = \bigotimes_{l = 1}^n \ab(\cos \mangle \ket|z_l> - i \sin \mangle \ket|z_l \oplus 1>).
\end{equation}
    We would now like to express this state as a superposition of computational basis states, which can be achieved by reconstructing each of its components $\ket|k \in \mathds{F}_2^n>$.  From \cref{eq:exp_theta_X_outer_tensor}, we see that the amplitude of $\ket|k>$ acquires a multiplicative pre-factor of either $-i\sin \mangle$ for each flipped, or $\cos \mangle$ for each preserved qubit, respectively. In other words, the eventual amplitude of state $\ket|k>$ is given by $f(\mangle, z, k) = \ab(\cos \mangle)^{n - d_H (z, k)} \cdot\ab(-i \sin \mangle)^{d_H (z, k)}$.
\end{proof}

Now that we have characterised the mixer and phase separation layers individually, we have the tools to further our analysis. Next in line is the optimisation landscape, which is a central point of interest when analysing \QAOA circuits.

\begin{figure*}
    \centering
    \begin{subfigure}{\textwidth}
    \includegraphics{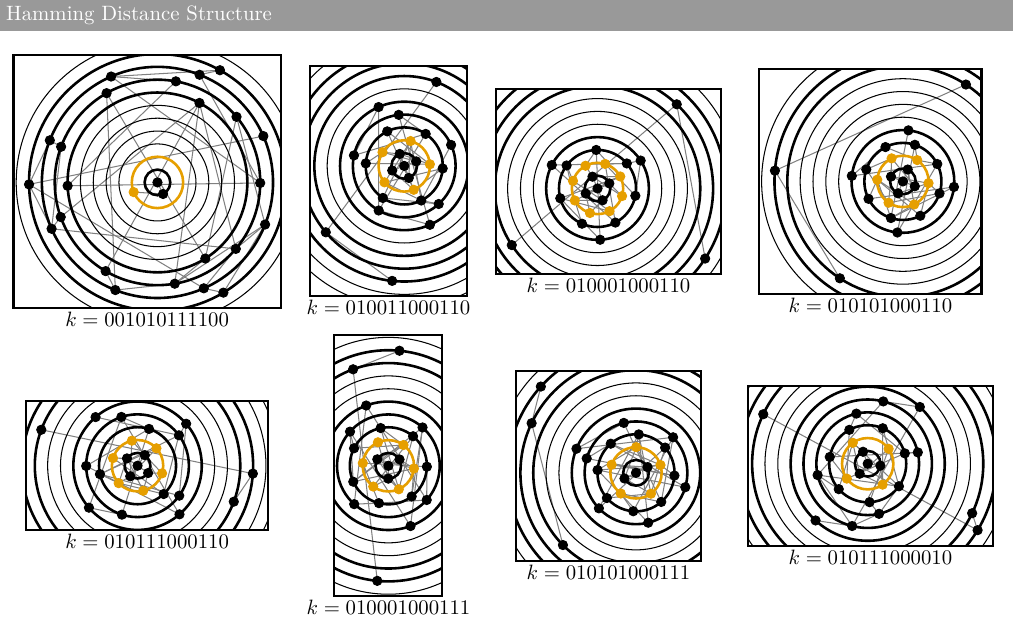}
    \caption{%
        \justifying
        \Cref{thm:optimisation_landscape} relates \QAOA ($p = 1$) landscapes with 
        the Hamming distance structures of instance solution spaces by counting
        equally distanced states in $\#_d \ab(k) = \ab|\ab\{ z \in T \mid d_H \ab(z, k) = d \}|$.
        We can visualize this by drawing the graph of all solutions $k$ in the 
        solution space $T$, where two solutions $k, z \in T$ are connected by
        an edge iff they have hamming distance $d_H (z, k) = 1$, \ie you can 
        reach one from the other by flipping just one bit. Now, to visualize 
        $\#_d \ab(k)$ we place $k$ in the center and order all $z \in T$ in 
        orbits around $k$, where $z$ is in the $i$-th orbit iff $d_H \ab(z, k) = i$.
        Then, $\#_d \ab(k)$ is the number of vertices on the $d$-th orbit.
    }
    \end{subfigure}
    \par\vspace{2em}
    \begin{subfigure}{\textwidth}
    \includegraphics{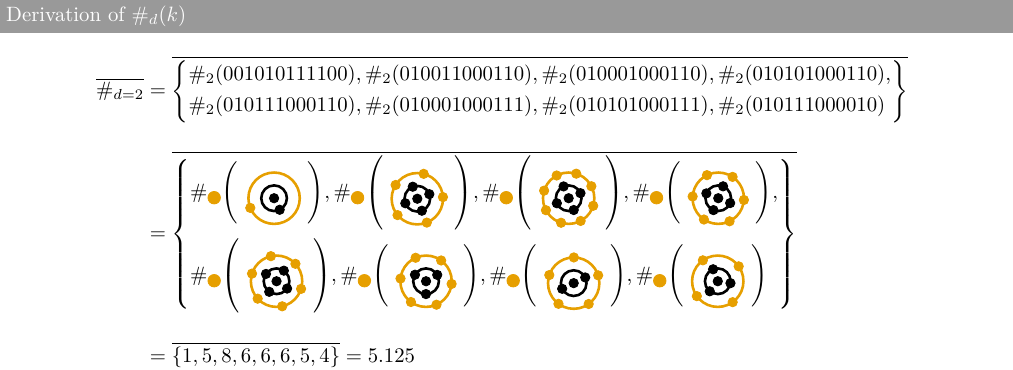}
    \caption{%
        \justifying
        By looking at the upper visualisation of solutions of a randomly generated
        \SAT formula (see \cref{sec:sat_example}), the global similarities between
        equal orbits across different instances become apparent. This is the key 
        motivation for \cref{thm:approx}, where global problem structures are captured 
        by averaging $\#_d \ab(k)$ over all $k \in T$. Making use of this visualisation, 
        we can get a graphical intuition on how to derive $\overline(\#_d)(k)$, 
        as depicted for $d = 2$ and the 8 solutions plotted above.
    }
    \end{subfigure}
    \caption{%
        \justifying
        A central part of our approximation theorem is the idea of grouping equal
        Hamming distances. How this captures structural information of a problem 
        solution space is visualized in (a). Mathematically this concept is expressed
        by a counting function $\#_d$, its derivation is depicted in (b).
    }
    \label{fig:hamming_dist_struct_vis}
\end{figure*}

Now that we have characterised the mixer and phase separation layers individually, we have the tools to further our analysis. Next in line is the optimisation landscape, which is a central point of interest when analysing \QAOA circuits. In \cref{thm:optimisation_landscape} we use that the landscape $F_1$ is a scaled sum of terms $c_k$ that for fixed angles $\beta$ and $\phi$ only depend on the Hamming distances between $k$ and all other states $z \in \mathds{F}_2^n$, which allows us to simplify the equation by collecting all terms with equal hamming distance. For a visualization see \cref{fig:hamming_dist_struct_vis}.

\begin{lemma}
  \label{thm:optimisation_landscape}
  A \QAOA circuit with $p=1$ constructed for general decision problems induces the optimisation landscape
  \begin{equation}
    \label{eq:F_1}
    F_1(\mangle, \pangle) = \frac{1}{2^n} \sum_{k \in T} \ab|c_k\ab(\mangle,\pangle)|^2
  \end{equation}
  with 
  \begin{equation}
    \label{eq:c_k}
      c_k\ab(\mangle,\pangle) \coloneqq \sum_{d=0}^n \ab(\#_d (k) \ab(e^{-i \pangle} - 1) + \binom{n}{d}) f_n \ab(\mangle, d)
  \end{equation}
    where $\#_d (k) = \ab|\ab\{z \in T \mid d_H\ab(z,k) = d \}|$ and $f_n \ab(\mangle, d) = \ab(\cos \mangle)^{n-d} (-i \sin \mangle)^d$. Here, $d_H\ab(z,k)$ is the Hamming distance between the bit strings of the binary representations of $z$ and $k$.
\end{lemma}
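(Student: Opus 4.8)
The plan is to evaluate $F_1(\mangle,\pangle)=\bra<\mangle,\pangle|_1 C\ket|\mangle,\pangle>_1$ directly, by propagating the uniform initial state $\ket|+>^{\otimes n}=2^{-n/2}\sum_{z}\ket|z>$ through the single phase-separator/mixer layer and then reading off the diagonal matrix element of the projector $C$. First I would apply the phase separator via \cref{eq:psep}: since every computational-basis amplitude of $\ket|+>^{\otimes n}$ equals $2^{-n/2}$, the phase separator multiplies exactly the target amplitudes by $e^{-i\pangle}$, producing $2^{-n/2}\big(e^{-i\pangle}\sum_{z\in T}\ket|z>+\sum_{z\notin T}\ket|z>\big)$. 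Then I would apply the mixer term by term using \cref{thm:mixer_amplitudes_f}, replacing each $\ket|z>$ by $\sum_{k}f(\mangle,z,k)\ket|k>$ and collecting the coefficient of each basis state $\ket|k>$.

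The second step uses that $C$ is the orthogonal projector onto the computational-basis subspace spanned by the target states $T$, so that $\bra<\mangle,\pangle|_1 C\ket|\mangle,\pangle>_1$ is just the sum of the squared moduli of the $\ket|k>$-amplitudes over $k\in T$. Writing the amplitude of $\ket|k>$ as $2^{-n/2}\,c_k(\mangle,\pangle)$ immediately yields $F_1=2^{-n}\sum_{k\in T}|c_k(\mangle,\pangle)|^2$, which is \cref{eq:F_1}; what remains is to identify the closed form of $c_k$.

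To reach \cref{eq:c_k} I would exploit that $f(\mangle,z,k)$ depends on $z$ and $k$ only through their Hamming distance, $f(\mangle,z,k)=f_n(\mangle,d_H(z,k))$. The coefficient of $\ket|k>$ is then $c_k=e^{-i\pangle}\sum_{z\in T}f_n(\mangle,d_H(z,k))+\sum_{z\notin T}f_n(\mangle,d_H(z,k))$. I would rewrite the non-target sum as the full sum over $\mathds{F}_2^n$ minus the target sum, giving $c_k=(e^{-i\pangle}-1)\sum_{z\in T}f_n(\mangle,d_H(z,k))+\sum_{z\in\mathds{F}_2^n}f_n(\mangle,d_H(z,k))$. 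Grouping both sums by Hamming distance $d$ turns the target sum into $\sum_{d=0}^{n}\#_d(k)\,f_n(\mangle,d)$ by the definition of $\#_d(k)$, and the full sum into $\sum_{d=0}^{n}\binom{n}{d}f_n(\mangle,d)$, since exactly $\binom{n}{d}$ strings lie at distance $d$ from any fixed $k$; combining them recovers \cref{eq:c_k}.

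I do not expect a genuine obstacle, as the argument is a direct computation. The two points that need care are the bookkeeping of the phase introduced by the phase separator (ensuring that only the target amplitudes are rotated, consistent with \cref{eq:psep}) and the ``full-minus-target'' decomposition of the non-target sum, which is precisely what produces the $\binom{n}{d}$ term and collapses the entire $\pangle$-dependence into the single factor $(e^{-i\pangle}-1)$ weighting $\#_d(k)$. Everything else is a rearrangement of finite sums.
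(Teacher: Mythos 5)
Your proposal is correct and follows essentially the same route as the paper's own proof: apply the phase separator via \cref{eq:psep}, expand the mixer with \cref{thm:mixer_amplitudes_f}, use the projector property of $C$ to reduce $F_1$ to $\frac{1}{2^n}\sum_{k\in T}\ab|c_k|^2$, and then regroup the $z$-sum by Hamming distance to obtain \cref{eq:c_k}. The only cosmetic difference is that you split the non-target sum as \enquote{full minus target}, whereas the paper reaches the same identity by writing $c_k$ with an indicator function $\ind_T(z)$ and factoring afterwards; both yield the coefficient $\#_d(k)\ab(e^{-i\pangle}-1)+\binom{n}{d}$ identically.
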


\begin{proof}
  By fixing $p = 1$ in \cref{eq:qaoa_state}, we obtain the state $\ket|\mangle, \pangle>_1 = e^{-i \mangle X} e^{-i \pangle C} \ket|+>^{\otimes n}$. After applying the phase separating gates to $\ket|+>^{\otimes n}$ and linearly pulling the mixer operators into the sums, we arrive at $\frac{1}{\sqrt{2^n}} \ab(e^{-i \pangle} \sum_{z \in T} e^{-i \mangle X} \ket|z> + \sum_{z \notin T} e^{-i \mangle X} \ket|z>)$. From \cref{thm:mixer_amplitudes_f}, it follows that
  \begin{equation}
  \begin{aligned}
    \label{eq:qaoa_state_1_mixer_applied}
    \ket|\mangle, \pangle>_1 = \frac{1}{\sqrt{2^n}} \Biggl(&e^{-i \pangle} \sum_{z \in T} \sum_{k \in \mathds{F}_2^n} f\ab(\mangle, z, k) \ket|k> + \\
                                                         &\sum_{z \notin T} \sum_{k \in \mathds{F}_2^n} f\ab(\mangle, z, k) \ket|k>\Biggr)  
  \end{aligned}
  \end{equation}
    By reordering the sums and factoring out $\ket{|k>}$ in \cref{eq:qaoa_state_1_mixer_applied}, we can express the state $\ket|\mangle, \pangle>_1$ as $\frac{1}{\sqrt{2^n}} \sum_{k \in \mathds{F}_2^n} c_k\ab(\mangle,\pangle) \ket|k>$ with $c_k\ab(\mangle,\pangle) = e^{-i \pangle} \sum_{z \in T} f\ab(\mangle, z, k) + \sum_{z \notin T} f\ab(\mangle, z, k)$
    (note that we omit parameters \(\ab(\mangle, \pangle)\) on \(c_{k}\) and other quantities below when the dependency is clear from the context). If we apply $C$ to $\ket|\mangle, \pangle>$ we basically filter out all $\ket|k> \notin T$ by linearly pulling $C$ into the sum, therefore obtaining $C \ket|\mangle, \pangle> = \frac{1}{\sqrt{2^n}} \sum_{k \in T} c_k\ab(\mangle,\pangle) \ket{|k>}$. We conclude that
  \begin{align}
    F_1(\mangle, \pangle) &=
    \braket<\mangle, \pangle|_1 C |\mangle, \pangle>_1\nonumber\\
    &= \frac{1}{2^n} \ab(\sum_{k \in \mathds{F}_2^n} \cconj{c_k\ab(\mangle,\pangle)} \bra<k|) \ab(\sum_{k \in T} c_k\ab(\mangle,\pangle) \ket|k>)\label{eq:F1_step2}\\ 
      &= \frac{1}{2^n} \sum_{k \in T} \cconj{c_k\ab(\mangle,\pangle)} c_k\ab(\mangle,\pangle) \\
      &= \frac{1}{2^n} \sum_{k \in T} \ab|c_k\ab(\mangle,\pangle)|^2.\label{eq:F1_step3}
  \end{align}
    The step from \cref{eq:F1_step2} to \cref{eq:F1_step3} follows since all pairs in $\ab\{\ket|k> \mid k \in \mathds{F}_2^n\} \times \ab\{\ket|k> \mid k \in T\}$ comprise orthogonal states. 
    Note that $c_k\ab(\mangle,\pangle)$ essentially contains a sum over all $z \in \mathds{F}_2^n$ with an optional phase factor of $e^{-i \pangle}$ if $z \in T$. Thus, we find that
    \begin{equation}
        \label{eq:ck_over_all_states}
        \begin{aligned}
            c_k\ab(\mangle,\pangle) = \sum_{z \in \mathds{F}_2^n} &\ind_T(z) e^{-i \pangle} f\ab(\mangle, z, k) + \\
            &\ab(1 - \ind_T(z)) f\ab(\mangle, z, k).
        \end{aligned}
    \end{equation}
    Here $\ind_T(z) = 1$ iff $z \in T$ and otherwise $\ind_T(z) = 0$. By recalling the definition of $f\ab(\mangle, z, k)$ in \cref{eq:f_with_states}, we recognise $\ab(\cos \mangle)^{n - d} \ab(-i \sin \mangle)^d$ as the basic shape of $f$, where $d$ is the Hamming distance between two concrete states $k,z \in \mathds{F}_2^n$. This means that for a fixed angle $\mangle_{\text{c}}$, $f(\mangle_{\text{c}}, k, z)$ is actually a function of the Hamming distance between two states. Therefore, while the sum in \cref{eq:ck_over_all_states} iterates over \emph{all} $2^n$ states $z$, there are effectively only $n+1$ \emph{different} basic terms in the sum---one for each possible distance $d$. Thus, if we count the number of occurrences $\#_d (k) = \ab|\ab\{z \in T \mid d_H\ab(z,k) = d \}|$ of each distance $d$ between $k$ and other states in the target set, we can reorder the sum while still capturing all possible bit-flip mappings from $z$ to $k$:
  \begin{align*}
      c_k\ab(\mangle,\pangle) = \sum_{d = 0}^n &\#_d (k) e^{-i \pangle} f_n \ab(\mangle, d) + \\
      &\ab(\binom{n}{d} - \#_d (k)) f_n \ab(\mangle, d)
  \end{align*}
  with 
  \begin{equation}
    f_n \ab(\mangle, d) \coloneqq \ab(\cos \mangle)^{n - d} \ab(-i \sin \mangle)^d.
  \end{equation}
    Now it is simply a matter of factoring out $f_n$ and $\#_d$ to arrive at \cref{eq:c_k}.
\end{proof}

\section{The \QAOA Approximation Theorem}\label{sec:approx_thm}
We now state and prove our main result.

\subsection{Intuition}
\begin{figure*}[htbp]
    \includegraphics[trim={0 3mm 0 10mm}, clip, width=\linewidth]{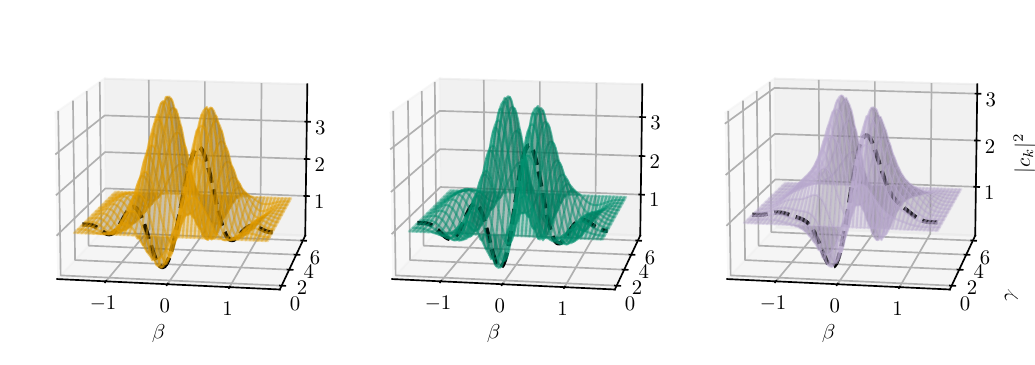}
    \caption{Three landscape components $\ab|c_k\ab(\mangle,\pangle)|^2$ for $k \in \ab\{10, 13, 14\} = T$ and a state space of dimension $n = 5$ are depicted in this order from left to right. The landscapes are represented by an array of vertical cross sections along $\mangle$.}
    \label{fig:3d_overview}
\end{figure*}
\begin{figure}[htbp]
    \includegraphics{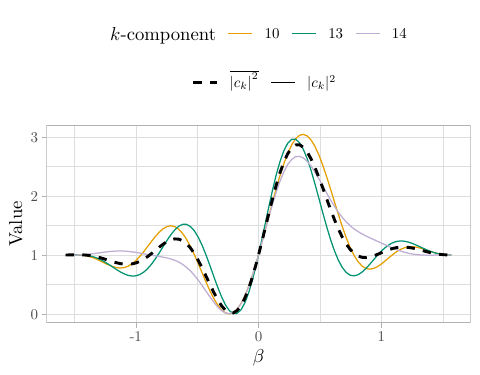}
    \caption{Cross section of the optimisation landscape components $\ab|c_k\ab(\mangle,\pangle_{\text{c}})|^2$  depicted in \cref{fig:3d_overview}
    at \(\pangle_{\text{c}}=1.2\) for $k \in \ab\{10, 13, 14\} \subset T$ (note matching colours). The dashed line shows \(E\ab(|c_k\ab(\mangle,\pangle_{\text{c}})|^2)\) for these three components, and illustrates how the mean value captures the globally relevant features of instance-specific information.}
    \label{fig:c_k_components}
\end{figure}

A straightforward description of an optimisation landscape might be directly derived by considering all point to point interactions between all states in superposition. Obviously, there are exponentially many such state interactions to consider. In \cref{sec:opt_landscape}, we captured those effects in $f\ab(\mangle, k, z)$ (see \cref{eq:f_with_states}) and showed that there are only $n + 1$ different outcomes that can occur based on the Hamming distance between two states. Now we use this reduction in complexity of the effect domain to derive an approximation theorem for the \emph{expected optimisation landscape} $E\ab(F_1)$ of a specific problem.

\subsection{Formalisation}
In \cref{sec:opt_landscape}, we presented a closed form of the optimisation landscape based on the Hamming distances between states. 
The components $\ab|c_k\ab(\mangle,\pangle)|^2$ in \cref{eq:c_k} play a crucial role
in $F_1\ab(\mangle, \pangle)$ (see \cref{eq:F_1}). As they are functions of the optimisation angles, we find it prudent to first obtain an intuitive visual understanding of their 
interrelationship. Each contribution $\ab|c_k\ab(\mangle,\pangle)|^2$ can be seen as a landscape for a specific value of \(k\) itself. Consider \cref{fig:3d_overview} that depicts, as an illustration, vertical slices of three $\ab|c_k\ab(\mangle,\pangle)|^2$ for $k \in T = \ab\{10, 13, 14\}$ and dimension $n = 5$. The macroscopic similarities between the three are visually obvious, which motivate the idea to take the mean over all $\ab|c_k\ab(\mangle,\pangle)|^2$. To further quantify this idea, we look at the overlaid cross sections for all three components compared with their mean in \cref{fig:c_k_components}. Here, we evaluate $\ab|c_k\ab(\mangle,\pangle)|^2$ for all $k \in T = \ab\{10, 13, 14\}$ across $0 \leq \mangle \leq \pi$ at a fixed $\pangle_{\text{c}} = 1.2$ and calculate the mean of all $\ab|c_k\ab(\mangle,\pangle_{\text{c}})|^2$ at each point. The chosen constant 1.2 does not have any particular computational or physical significance, but results in a \enquote{typical} two-dimensional sub-space of the optimisation landscape. We again observe how the mean value nicely captures the higher level structures shared between all individual components. In fact, this intuition can be expressed as a precise mathematical relationship, as $F_1$ is nothing more than this mean value of $c_k$ terms scaled by the relative cardinality $\frac{\ab|T|}{2^n}$ of the target space $T$, since
\begin{equation}
    \label{eq:F_as_average}
    \begin{aligned}
        \frac{\ab|T|}{2^n} \overline{\ab|c_k\ab(\mangle,\pangle)|^2} &= \frac{\ab|T|}{2^n} \frac{1}{\ab|T|} \sum_{k \in T} \ab|c_k|^2 \\
        &=  \frac{1}{2^n} \sum_{k \in T} \ab|c_k|^2 = F_1\ab(\mangle,\pangle).
    \end{aligned}
\end{equation}
As remarked above, the global structure of $F_1$ is very similar across different instances of a given problem. This renders the \emph{expected value} of an optimisation landscape for a problem an interesting object of study, as it captures the salient properties \emph{across} instances. In particular, we argue 
that the ability to efficiently approximate  
is well suited to improve the understanding of \QAOA, as we show in \cref{sec:application},
and also leads to practical implications that can help
optimise the use of \QAOA, as we detail in \cref{sec:utility}.

Before that, and based on these observations, let us however first state and prove our main theorem for efficiently approximating the expected value of $F_1 \ab(\mangle, \pangle)$ across instances of a computational problem. Recall that the specific optimisation Landscape of a single instance can be expressed as the mean
of its $\ab|c_k \ab(\mangle, \pangle)|^2$ weighted by the relative cardinality of its solution space compared to the full state space ($|T| 2^{-n}$). Now we show, that we can approximate the expected landscape of a random problem instance by calculating the expected values of these two parameters: $E\ab(\overline{\ab|c_k \ab(\mangle, \pangle)|^2})$ and $E\ab(|T|)$.

\begin{theorem}
  \label{thm:approx}
     Let $\#_d (k)$ be defined as in \cref{thm:optimisation_landscape}, then we can approximate the expected value of $F_1 (\mangle, \pangle)$ for a random instance of a problem by
  \begin{displaymath}
      \tilde{E}\ab(F_1 (\mangle, \pangle)) = \frac{E\ab(\ab|T|)}{2^n} E\ab(\overline{\ab|c_k\ab(\mangle,\pangle)|^2})
  \end{displaymath}
  with
  \begin{equation}
      \label{eq:E_ck_mean}
      E\ab(\overline{\ab|c_k\ab(\mangle,\pangle)|^2}) = \sum_{d_1, d_2 = 0}^n w_{d_1, d_2}\ab(\pangle) f_n \ab(\mangle, d_1) f_n \ab(\mangle, d_2)^* ,
  \end{equation}
    where $w_{d_1, d_2} \ab(\pangle)$ is
  \begin{displaymath}
  \begin{aligned}
      w_{d_1, d_2}\ab(\pangle) = &E\ab(\overline{\#_{d_1}\ab(k) \#_{d_2}\ab(k)}) \ab(e^{-i\pangle} - 1)\ab(e^{i\pangle} - 1) + \\
                                 &E\ab(\overline{\#_{d_1}\ab(k)}) \ab(e^{-i \pangle} - 1) \binom{n}{d_2} + \\
                                 &E\ab(\overline{\#_{d_2}\ab(k)}) \ab(e^{i \pangle} - 1) \binom{n}{d_1} + 
                                 \binom{n}{d_1} \binom{n}{d_2}.
  \end{aligned}
  \end{displaymath}
    We calculate mean values over all $k$ in the instance specific target set $T$. The absolute approximation error is bounded by $\ab|E\ab(F_1 \ab(\mangle, \pangle)) - \tilde{E}\ab(F_1 \ab(\mangle, \pangle))| \leq \sqrt{\Var\ab(\ab|T|) \Var\ab(\overline{\ab|c_k|^2})}$.
\end{theorem}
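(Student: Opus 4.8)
The plan is to build on the exact identity already established in \eqref{eq:F_as_average}, namely $F_1\ab(\mangle,\pangle) = \frac{\ab|T|}{2^n}\,\overline{\ab|c_k\ab(\mangle,\pangle)|^2}$, and to treat both $\ab|T|$ and the mean $\overline{\ab|c_k|^2}$ as random variables over the draw of a problem instance. Taking the expectation of this product gives the exact relation $E\ab(F_1) = \frac{1}{2^n}E\ab(\ab|T|\,\overline{\ab|c_k|^2})$, and the quantity $\tilde{E}\ab(F_1)$ in the statement is precisely what results from \emph{decoupling} this expectation into a product of expectations, $\frac{E\ab(\ab|T|)}{2^n}E\ab(\overline{\ab|c_k|^2})$. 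The argument therefore splits into two independent tasks: (i) justifying the closed form for $E\ab(\overline{\ab|c_k|^2})$, and (ii) controlling the error introduced by the decoupling.

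For task (i) I would expand $\ab|c_k|^2 = c_k\,\overline{c_k}$ directly from the closed form \eqref{eq:c_k}, writing $c_k$ as a sum over $d_1$ and $\overline{c_k}$ as a sum over $d_2$ (with $e^{-i\pangle}\mapsto e^{i\pangle}$ and $f_n\mapsto f_n^*$ in the conjugate factor). Multiplying out the two bracketed factors $\ab(\#_{d_1}(k)(e^{-i\pangle}-1)+\binom{n}{d_1})$ and $\ab(\#_{d_2}(k)(e^{i\pangle}-1)+\binom{n}{d_2})$ yields exactly four terms: the product of the two counts, the two mixed count/binomial terms, and the pure binomial product. Since the factors $f_n\ab(\mangle,d_1)\,f_n\ab(\mangle,d_2)^*$ depend only on $\mangle$ and the indices $d_1,d_2$ — not on $k$ nor on the instance — they pull out of both the mean over $k\in T$ and the instance expectation. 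Applying linearity of $\overline{\,\cdot\,}$ and of $E$ to the four terms then reproduces the coefficients $w_{d_1,d_2}\ab(\pangle)$ verbatim; this is routine bookkeeping.

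Task (ii) carries the real content. The decoupling error is exactly a covariance: $E\ab(F_1)-\tilde{E}\ab(F_1) = \frac{1}{2^n}\ab(E\ab(\ab|T|\,\overline{\ab|c_k|^2}) - E\ab(\ab|T|)E\ab(\overline{\ab|c_k|^2})) = \frac{1}{2^n}\Cov\ab(\ab|T|,\,\overline{\ab|c_k|^2})$. I would then invoke the Cauchy--Schwarz inequality for covariances, $\ab|\Cov(X,Y)|\le\sqrt{\Var(X)\Var(Y)}$, with $X=\ab|T|$ and $Y=\overline{\ab|c_k|^2}$, and finally bound the prefactor by $2^{-n}\le 1$ to obtain the stated estimate $\ab|E\ab(F_1)-\tilde{E}\ab(F_1)|\le\sqrt{\Var\ab(\ab|T|)\Var\ab(\overline{\ab|c_k|^2})}$ (the $2^{-n}$ may in fact be retained for a tighter bound).

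The main obstacle is conceptual rather than computational: one must set up the probability space carefully so that the ``mean over $k\in T$'' and the ``expectation over instances'' are cleanly separated, because $T$ is itself a random set of random cardinality, and $\overline{\ab|c_k|^2}=\frac{1}{\ab|T|}\sum_{k\in T}\ab|c_k|^2$ is a ratio whose denominator is the very variable $\ab|T|$ appearing in the decoupling. The cleanest route is to never expand $\overline{\ab|c_k|^2}$ into its ratio form during task (ii), but to keep it as a single scalar random variable, so that the identity $\ab|T|\cdot\overline{\ab|c_k|^2}=\sum_{k\in T}\ab|c_k|^2$ makes $E\ab(\ab|T|\,\overline{\ab|c_k|^2})=2^n E\ab(F_1)$ manifest and renders the covariance interpretation transparent; the Cauchy--Schwarz step is then immediate.
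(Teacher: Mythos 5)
Your proposal is correct and follows essentially the same route as the paper's own proof: the same four-term expansion of $|c_k(\mangle,\pangle)|^2$ over index pairs $(d_1,d_2)$, followed by linearity of the mean over $k \in T$ and of the instance expectation to recover $w_{d_1,d_2}(\pangle)$, and the same identification of the decoupling error as a covariance of $|T|$ and $\overline{|c_k|^2}$ bounded via Cauchy--Schwarz. If anything, you handle the $2^{-n}$ prefactor more carefully than the paper, whose proof silently drops it in front of the covariance term; retaining it, as you note, gives a slightly tighter bound than the one stated.
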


\begin{proof}
    We start by reformulating $|c_k \ab(\mangle, \pangle)|^2$. Recall that $|c_k \ab(\mangle, \pangle)|^2 = c_k \ab(\mangle, \pangle) c_k \ab(\mangle, \pangle)^*$ and $c_k \ab(\mangle, \pangle) = \sum_{d = 0}^n \ab(\#_d (k) \ab(e^{-i \pangle} - 1) + \binom{n}{d}) f_n \ab(\mangle, d)$. Then we use the distributivity of the complex conjugate over addition and multiplication to pull it into the sum:
    \begin{displaymath}
    \begin{aligned}
        c_k \ab(\mangle, \pangle)^* &= \sum_{d = 0}^n \ab(\#_d (k) \ab(\ab(e^{-i \pangle})^* - 1) + \binom{n}{d}) f_n \ab(\mangle, d)^* \\
                                    &= \sum_{d = 0}^n \ab(\#_d (k) \ab(e^{i \pangle} - 1) + \binom{n}{d}) f_n \ab(\mangle, d)^*
    \end{aligned}
    \end{displaymath}
    where the second equality follows from $(e^z)^* = e^{z^*}$. From this we see, that
    \begin{alignat*}{2}
    \ab|c_k \ab(\mangle, \pangle)|^2 = &\Bigg(\sum_{d_1 = 0}^n \bigg(&&\#_{d_1} (k) \ab(e^{-i \pangle} - 1) + \\
                                                         & &&\binom{n}{d_1}\bigg) f_n \ab(\mangle, d_1)\Bigg) \cdot \\
                                           &\Bigg(\sum_{d_2 = 0}^n (&&\#_{d_2} (k) \ab(e^{i \pangle} - 1) + \\
                                                         & &&\binom{n}{d_2}) f_n \ab(\mangle, d_2)^*\Bigg) 
    \end{alignat*}
    and further
    \begin{align*}
        &\ab|c_k \ab(\mangle, \pangle)|^2 = \sum_{d_1, d_2 = 0}^n \Bigg( \binom{n}{d_1} \binom{n}{d_2} + \\
        &\#_{d_1} (k) \ab(e^{-i \pangle} - 1) \binom{n}{d_2} +  \#_{d_2} (k) \ab(e^{i \pangle} - 1) \binom{n}{d_1} + \\
        &\#_{d_1} (k) \#_{d_2} (k) \ab(e^{-i \pangle} - 1) \ab(e^{i \pangle} - 1) \Bigg) f_n \ab(\mangle, d_1) f_n \ab(\mangle, d_2)^*
    \end{align*}
    Now from $\overline{\ab|c_k \ab(\mangle, \pangle)|^2} = \frac{1}{|T|} \sum_{k \in T} \ab|c_k \ab(\mangle, \pangle)|^2$ it follows by reordering the sum that
    \begin{align*}
        &\overline{\ab|c_k \ab(\mangle, \pangle)|^2} = \sum_{d_1, d_2 = 0}^n \Bigg(\binom{n}{d_1} \binom{n}{d_2} + \\
        &\overline{\#_{d_1} (k)} \ab(e^{-i \pangle} - 1) \binom{n}{d_2} + \overline{\#_{d_2} (k)} \ab(e^{i \pangle} - 1) \binom{n}{d_1} + \\
        &\overline{\#_{d_1} (k) \#_{d_2} (k)} \ab(e^{-i \pangle} - 1) \ab(e^{i \pangle} - 1) \Bigg)  f_n \ab(\mangle, d_1) f_n \ab(\mangle, d_2)^*
    \end{align*}
    where $\overline{\#_d (k)} = \frac{1}{\ab|T|} \sum_{k \in T} \#_d (k)$. Finally, \cref{eq:E_ck_mean} follows from the linearity of the expected value. From \cref{eq:F_as_average} we deduce that $E\ab(F_1) = E \ab(\frac{\ab|T|}{2^n} \overline{\ab|c_k|^2}) = \frac{E\ab(\ab|T|)}{2^n} E\ab(\overline{\ab|c_k|^2}) + \Cov\ab(\ab|T|, \overline{\ab|c_k|^2})$. Therefore we conclude that $\ab|\tilde{E}\ab(F_1) - E\ab(F_1)| = \left|\Cov\ab(\ab|T|, \overline{\ab|c_k|^2})\right| \leq \sqrt{\Var\ab(\ab|T|) \Var\ab(\overline{\ab|c_k|^2})}$.
\end{proof}

\subsection{Generality}

The approximation theorem only operates on the Hamming distance structure of problem target spaces. Recall, however, that we arrive at the target space $T$ by defining it through an isomorphism from the set of all solutions $\{z \in \mathds{F}_2^n \mid c(z) = 1\}$ to a subset of quantum basis states $T = \{\ket|z> \mid c(z) = 1\}$, let that be $f: z \mapsto \ket|z>$. A more complex constraint Hamiltonian might also need to operate on ancilla qubits that are not part of the target space, possibly embedding $f$ in a structure like $g: xz \mapsto \ket|h(x)> \otimes \ket|z>$. This additional degree of freedom allows for inconsistent side effects on anchilla registers while still maintaining functional correctness. As a result, this could thus alter the hamming distance between pairs of states encoding the same solutions. As a consequence the approximation theorem would require a mechanism to take computations on arbitrary anchilla registers into account. We would thus lose the elegance of being independent of concrete Hamiltonian constructions. We now show that for every problem in \NP, it is possible to efficiently construct an ancilla register independent constraint Hamiltonian. This guarantees a wide applicability of \cref{thm:approx} only focusing on the concrete solution structure of the problem at hand. For the following theorem and its proof we use verifier definition of \NP, that is: A problem is in \NP if and only if there exists an algorithm $v: \mathds{F}_2^n \to \mathds{F}_2$ that efficiently verifies solutions $z \in \mathds{F}_2$, \ie $v(z) = 1$ iff $z$ is a valid solution to the problem and otherwise $v(z) = 0$.

\begin{theorem}
    \label{thm:ancilla_invariance}
    For every problem in \NP there exists a family of constraint two-level Hamiltonians $\ab\{C_n\}_{n \in \mathds{N}}$ that can be efficiently constructed such that 
    \begin{equation}
        C_n\ket|z>\ket|\bm{0}> = \begin{cases}
            \ket|z>\ket|\bm{0}> &\quad z \:\:\text{proves the decision property} \\
            0                       &\quad \text{otherwise}
        \end{cases}
    \end{equation}
    for proofs $z$ of length $n$.
\end{theorem}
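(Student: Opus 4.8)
The plan is to reduce the statement to the defining characterisation of \NP and then invoke the standard theory of reversible computation. A problem in \NP comes equipped with a polynomial-time verifier; fixing the instance, its acceptance condition is a Boolean predicate $R : \mathds{F}_2^n \to \mathds{F}_2$ with $R(z) = 1$ iff the length-$n$ proof $z$ witnesses the decision property. By the Cook–Levin construction, $R$ is computed by a Boolean circuit of size polynomial in $n$. My goal is to turn this circuit into a \emph{projector} $C_n$ (so that $\sigma(C_n) = \ab\{0,1\}$, matching the two-level setting used throughout) whose phase separator is efficiently implementable, and whose action on $\ket|z>\ket|\bm{0}>$ is exactly the one claimed.

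The construction proceeds via Bennett's reversible simulation. First I would replace every irreversible gate of the verifier circuit by a reversible counterpart (\textsc{toffoli}, \textsc{cnot}, \textsc{not}) acting on a polynomial-size work register, obtaining an efficiently constructible unitary $\tilde U$ with $\tilde U \ket|z>\ket|\bm{0}>_{\text{work}} = \ket|z>\ket|g(z)>_{\text{work}}$, where $g(z)$ is intermediate garbage containing $R(z)$ in one designated bit. The crucial step is the compute–copy–uncompute trick: a \textsc{cnot} copies the answer bit onto a fresh output qubit, and applying $\tilde U^\dagger$ restores the work register to $\ket|\bm{0}>$, yielding an efficiently constructible unitary $U_V = \tilde U^\dagger\,\textsc{cnot}\,\tilde U$ with
\begin{displaymath}
  U_V \ket|z>\ket|\bm{0}>_{\text{work}}\ket|0>_{\text{out}} = \ket|z>\ket|\bm{0}>_{\text{work}}\ket|R(z)>_{\text{out}}.
\end{displaymath}
Identifying the theorem's ancilla register with the joint state $\ket|\bm{0}> \coloneqq \ket|\bm{0}>_{\text{work}}\ket|0>_{\text{out}}$, I then define the conjugated single-qubit projector $C_n = U_V^\dagger \ab(\mathds{1} \otimes \ket|1>\bra<1|_{\text{out}}) U_V$. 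This is manifestly a projector, and its phase separator factorises as $e^{-i\pangle C_n} = U_V^\dagger\, e^{-i\pangle\, \ket|1>\bra<1|_{\text{out}}}\, U_V$, so it remains efficiently implementable from the polynomial-size $U_V$.

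It remains to verify the action. On $\ket|z>\ket|\bm{0}>$, the inner $U_V$ writes $R(z)$ to the output qubit, the projector $\ket|1>\bra<1|_{\text{out}}$ keeps the branch with weight $R(z)$, and $U_V^\dagger$ uncomputes, returning $\ket|z>\ket|\bm{0}>$ when $R(z) = 1$ and $0$ when $R(z) = 0$, exactly as required. Polynomiality of the verifier, the reversible simulation, and the uncomputation establishes efficient constructibility of the family $\ab\{C_n\}_{n \in \mathds{N}}$.

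The main obstacle, and the reason the statement is phrased as \emph{ancilla invariance}, is precisely the uncomputation step. Without restoring the work register, the verifier would leave $z$-dependent garbage $\ket|g(z)>$ entangled with the proof register, so the target space would fail to factor as a subset of the proof register times the fixed ancilla state $\ket|\bm{0}>$; the Hamming-distance counts $\#_d(k)$ that drive \cref{thm:approx} would then be contaminated by ancilla degrees of freedom. Bennett's trick guarantees the ancilla register is invariant (always returned to $\ket|\bm{0}>$) on the relevant states, so the entire solution-space structure lives in the proof register alone, which is what makes \cref{thm:approx} applicable uniformly across all of \NP.
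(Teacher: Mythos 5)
Your proof is correct and takes essentially the same approach as the paper: both implement the \NP verifier as a reversible/unitary circuit and define $C_n$ by conjugating a single-qubit projector onto the verifier's answer bit, so that the outer uncomputation restores the ancilla register and only the proof register carries the target-space structure. The one (harmless, slightly redundant) difference is that you first clean the work register via Bennett's compute--copy--uncompute before conjugating, whereas the paper conjugates directly by the garbage-producing verifier circuit $V_n$, relying on the outer $V_n^\dagger$ to uncompute the garbage on the projected branch.
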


\begin{proof}
    By definition a problem is in \NP if and only if there exist an efficient verifier $v$ that returns $v(z) = 1$ on all valid proofs $z$ of the decision property and $v(z) = 0$ on all other inputs. Thus, there also exits an efficiently constructable family of quantum circuits $\ab\{V_n\}_{n \in \mathds{N}} : V_n \ket|z>\ket|\bm{0}> = \ket|z>\ket|\bm{v}>$ with $\bm{v} = \ab(v(z), v_2, \dots, v_{p(n)})$, where $\ket|\bm{0}> = \ket|0>^{\otimes p(n)}$ is an ancilla register of size $p(n)$ for some polynomial $p$. Now, the construction of $C_n$ looks as follows: $C_n \coloneqq V_n^\dagger (\mathds{1}^{\otimes n} \otimes \ketbra|1><1| \otimes \mathds{1}^{p(n) - 1}) V_n$.
\end{proof}

\begin{remark}
    \Cref{thm:ancilla_invariance} ensures us that if a problem is in \NP, we only have to consider its target space $T$ of problems when applying \cref{thm:approx}. Remember that $T$ is isomorphic to the solution or proof space of a problem. In essence, \cref{thm:approx} addresses structural properties of the expected solution space of a problem. So, the combination of \cref{thm:approx} and \cref{thm:ancilla_invariance} shows the existence of macroscopic structures on a problem level that significantly influence the optimisation landscapes induced by \QAOA circuits for at least all \NP problems. 
\end{remark}

On a more practical note, this result means that we dont have to take computational side effects of concrete circuit implementations into account, when using \cref{thm:approx} to analyse the effect of problem structures on QAOA. Let's say for example, one wants to solve a \kclique problem, wich is the corresponding decision problem to the \maxclique optimisation problem. Then we can construct a QAOA circuit using \cref{thm:ancilla_invariance} to construct the constraint Hamiltonian needed for the problem layer. The resulting QAOA circuit will then solve the \kclique problem by preparing a superposition $\sum_i \alpha_i \ket|z_i>\ket|\bm{0}>$ of states where $z_i$ representing cliques of size $k$. For a detailed description of exactly this example see \cref{sec:kclique}.

\subsubsection{A Note about \NP}
We now want to argue why covering \NP is practically sufficient to prove generality for our framework. We acknowledge that in the quantum computing community usually optimization problems like for example the ever present \maxcut problem are the subject of investigation. Thus, it might seem unintuitive to base a framework on a decision problem class like \NP. In fact, most problems relevant in practice are in some variant part of \NP, see the seminal list of \NP problems \cite{Karp1972} by \citeauthor{Karp1972}. Indeed, foundational work by \citeauthor{lucas2014ising}  that arguably significantly contributed to the commercialisation of quantum computing \cite{lucas2014ising} provided Ising formulations of many important \NP problems, bridging the gap between classical industrial problems solving usually powered by highly sophisticated SAT-solvers \cite{heule2024proceedings}, which are basically \NP solvers, and quantum computing at that time. 

Coming back now to the other perspective, starting with a optimisation problem, how do we apply our framework. Let's consider the \maxcut problem for instance. Instead of asking for an optimal cut, we reformulate the problem to ask whether there exists a cut above a certain threshold $k \in \mathds{N}$. We can as proven above for all optimization problems which have a threshold variant in \NP. In practice this applies many interesting problems like \maxcut, Partition, Vertex Cover, etc. \cite{Karp1972}. Then we formulate a Hamiltonian that \emph{verifies} states encoding a cut satisfying the threshold and apply our framework to it. Compared to the classical SAT-solver case, the Hamiltonian plays the roll of the stat formula and our QAOA approach takes the place of the SAT-solver.

\section{Application}
\label{sec:application}
After having set the foundations for a methodology to understand
structural properties of optimisation problems across instances, let us now
commence with applying the framework to several concrete examples. We consider five different subject problems respectively scenarios (uniform random sampling, clustered 
sampling,  Boolean satisfiability, k-clique, and one-way functions in the form of qr factoring) that build on each other to best introduce the application of the \QAOA approximation theorem on real problems. Overall, the selection of examples is carefully curated to highlight different aspects of using our approximation theorem: We first compare the two possible approaches of either analytically or empirically analysing the target space structure of a problem at hand. Then we demonstrate with a purposefully constructed sampling method that the existence of a stochastic dependence of two states with a certain Hamming distance in a random instance significantly influences he optimisation landscape. Following that, we use \SAT as a first straight forward real world example with an easy to construct constraint Hamiltonian. After that, we show that for all problems in \NP, even with more complicated constraints, there is a construction for an appropriate constraint Hamiltonian that satisfies the preconditions of our approximation theorem. As a final example, we highlight the case of integer factoring to argue that problems based on
one-way functions are interesting subjects for our methods as their target space can be relatively easy characterised.

\subsection{Uniform Random Sampling}
\label{sec:uniform_sampled}
Central in our theory is the target space $T$ of the constraint projector $C$. Given a concrete interpretation, every state in $T$ corresponds to a problem solution.\footnote{Note that given different interpretations, a state in $T$ can encode different solutions of different problems, too.} Thus, our notion of target spaces $T$ is an abstraction of concrete problem solution spaces. Sampling a random target space $T$ equals sampling a random problem minus the abstraction of a target state interpretation. Furthermore, the description of a sampling procedure for $T$ defines an abstract random problem, where the problem itself becomes a random variable in the probabilistic point of view. To provide a smooth onramp to more complex examples further below, we start by simply sampling $T$ by randomly drawing states from the state space with uniform probability.  

From \cref{thm:approx}, we see that $E\ab(F_1)$ can be efficiently approximated if the quantities (a) $\ab|T|$, (b) $E\ab(\overline{\#_d (k)})$ for all $0 \leq d \leq n$, and (c) $\Cov \ab(\overline{\#_{d_1} (k)}, \overline{\#_{d_2} (k)})$ for all $0 \leq d_1, d_2 \leq n$ are provided. Recall that the expected value is calculated over all instances of a specific problem, thus $\overline{\#_d (k)}$ are random variables describing the mean value $\frac{1}{\ab|T|} \sum_{k \in T} \#_d(k)$ over the target set $T$ of a random problem instance. There are, in general, two approaches: 
We can, if possible, determine the distribution of the random variables $\overline{\#_d (k)}$
by analytical means, or by an empirical
numerical approach. For uniform sampling as considered in this
motivating example, it is relatively straightforward to execute
the necessary analytic calculations. 

\subsubsection{Analytic Approach}
\begin{definition}
    We consider an urn model with balls of $m \in \mathds{N}$ different colours. The population of all balls in the urn is described by $\bm{X} = \ab(X_1, X_2, \dots, X_m)$, where $X_i$ is the number of balls with colour $i$. Then, if $n$ balls are drawn without replacement, the probability of having $x_i$ balls of colour $i$ in the sample is given by the 
    multivariate hypergeometric distribution
    \(P\ab(\bm{x}) = \hypgeom \ab(\bm{X}, \bm{x})\)
    with $\bm{x} = \ab(x_1, x_2, \dots, x_m)$. Recall that, by textbook knowledge, the hypergeometric distribution is defined by 
    \(P(\bm{x}) = \prod_{i=1}^m \binom{X_i}{x_i}/\binom{N}{n}\),
         with \(N \coloneqq \sum_{i=1}^m X_i\).
\end{definition}

    The size of $T$ is trivially given, as we always sample a fixed number of states. In the following discussion we will encounter expected values over different sample spaces, we will differentiate this by $E_T(\cdot)$ and $E_{\mathcal{I}}(\cdot)$ being defined to be the mean over all states in the target space $T$ and over all instances in the set of problem instances $\mathcal{I}$. No subscript is used if the sample space is clear from context. Further note that $\overline{\#_d(k)}$ is the sample mean of $T$ where $T$ is sampled from the sample space of all problem instances $\mathcal{I}$. Therefore, $E_{\mathcal{I}}\ab(\overline{\#_d (k)}) = E_T \ab(\#_d (k))$ as the sample mean $\overline{\#_d(k)}$ is an unbiased estimator of the expected value $E_T \ab(\#_d (k))$. The same also holds for $\overline{\#_{d_1}(k) \#_{d_1}(k)}$. To calculate $E_T\ab(\#_d (k))$, we need to know the distributions of $\ab\{\#_d (k)\}_{d=0}^n$. To compute $E_{\mathcal{I}}\ab(\overline{\#_{d_1} (k) \#_{d_2} (k)}) = E_T \ab(\#_{d_1} (k) \#_{d_2} (k))$ we use that $E_T \ab(\#_{d_1}(k) \#_{d_2} (k)) = E_T \ab(\#_{d_1}(k)) E_T \ab(\#_{d_2}(k)) + \Cov\ab(\#_{d_1} (k), \#_{d_2} (k))$. For $\Cov \ab(\#_{d_1} (k), \#_{d_2} (k))$, the joint probability distribution of $\ab\{\ab(\#_{d_1} (k), \#_{d_2} (k))\}_{d_1, d_2 = 0}^n$ is required.

As $T$ is uniformly sampled from the complete state space, this basically leads to an urn model without replacement. Therefore, except for some edge cases, the random variables $\#_d (k)$ can be described by a Hypergeometric distribution, see \cref{thm:Pd}.

\pagebreak
\begin{widetext}
\begin{lemma}
    \label{thm:Pd}
    In case of uniform target sampling the probabilities $P\ab(\#_d (k) = x)$ and $P\ab(\#_{d_1} (k) = x_1, \#_{d_2} (k) = x_2)$ are defined as follows:
    \begin{equation}
        \label{eq:Pd}
        P\ab(\#_d (k) = x) = 
        \begin{cases}
            1 &\quad d = 0 \wedge x = 1 \\
            0 &\quad d = 0 \wedge x \neq 1 \\
            \hypgeom \ab(\bm{X}, \bm{x}) &\quad \text{otherwise}
        \end{cases}
    \end{equation}
    with $\bm{X} = \ab(\binom{n}{d}, 2^n - 1 - \binom{n}{d})$ and $\bm{x} = \ab(x, \ab|T| - 1 - x)$. Furthermore, 
    \begin{equation}
        \label{eq:Pd1d2}
        P\ab(\#_{d_1} (k) = x_1, \#_{d_2} (k) = x_2) = \begin{cases}
            P\ab(\#_{d_1} (k) = x_1) P\ab(\#_{d_2} (k) = x_2) &\quad d_1 = 0 \vee d_2 = 0 \\
            P\ab(\#_{d_1} (k) = x_1) &\quad d_1 = d_2 \wedge x_1 = x_2 \\
            0 &\quad d_1 = d_2 \wedge x_1 \neq x_2 \\
            \hypgeom \ab(\bm{X}, \bm{x}) &\quad \text{otherwise}
        \end{cases}
    \end{equation}
    with $\bm{X} = \ab(\binom{n}{d_1}, \binom{n}{d_2}, 2^n - 1 - \binom{n}{d_1} - \binom{n}{d_2})$ and $\bm{x} = \ab(x_1, x_1, |T| - 1 - x_1 - x_2)$.
\end{lemma}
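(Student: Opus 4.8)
The plan is to recognise the uniform sampling of $T$ as a classical urn model and then read off the hypergeometric laws directly, after carefully accounting for the fact that the reference state $k$ is itself a member of $T$. Since $\#_d(k)$ is only ever evaluated for $k$ ranging over $T$, the relevant law is the conditional distribution given $k \in T$; by the symmetry of the Hamming cube this conditional law is identical for every $k$, so I may fix an arbitrary $k$. Conditioning on $k \in T$, the remaining $\ab|T| - 1$ elements of $T$ are drawn uniformly without replacement from the $2^n - 1$ states distinct from $k$. The structural fact that drives everything is that the state space partitions into distance classes around $k$: exactly $\binom{n}{d}$ states lie at Hamming distance $d$ from $k$ (one chooses which $d$ of the $n$ bits to flip), these classes are pairwise disjoint and exhaust all $2^n$ states, and the distance-$0$ class is the singleton $\ab\{k\}$.

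Given this picture, the marginal \cref{eq:Pd} is a two-colour urn computation. For $d \geq 1$ I would colour each of the $2^n - 1$ remaining states according to whether its distance from $k$ equals $d$ or not; the population is $\bm{X} = \ab(\binom{n}{d}, 2^n - 1 - \binom{n}{d})$, and drawing $\ab|T| - 1$ balls makes $\#_d(k)$ the number of distance-$d$ balls drawn, \ie the first component of the sample $\bm{x} = \ab(x, \ab|T| - 1 - x)$, which is exactly hypergeometric. The case $d = 0$ is degenerate: the only state at distance $0$ from $k$ is $k$ itself, and $k \in T$ by the conditioning, so $\#_0(k) = 1$ almost surely, which gives the first two branches of \cref{eq:Pd}.

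For the joint law \cref{eq:Pd1d2} I would handle the generic case $d_1 \neq d_2$ with $d_1, d_2 \geq 1$ first. Disjointness of the distance classes lets me use a three-colour urn with population $\bm{X} = \ab(\binom{n}{d_1}, \binom{n}{d_2}, 2^n - 1 - \binom{n}{d_1} - \binom{n}{d_2})$, so that $\ab(\#_{d_1}(k), \#_{d_2}(k))$ are the drawn colour multiplicities and the multivariate hypergeometric formula applies with sample $\bm{x} = \ab(x_1, x_2, \ab|T| - 1 - x_1 - x_2)$. The remaining branches are the degeneracies: if $d_1 = d_2$ then $\#_{d_1}(k)$ and $\#_{d_2}(k)$ are literally the same random variable, so the joint law lives on the diagonal $x_1 = x_2$ and reduces there to the marginal; if either index is $0$, that count is almost surely $1$ and carries no randomness, so the joint factorises into the product of marginals.

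The argument is mostly routine once the urn model is in place, so what needs care is not a deep obstacle but the bookkeeping around conditioning on $k \in T$: this is what reduces the population from $2^n$ to $2^n - 1$ and the sample size from $\ab|T|$ to $\ab|T| - 1$, and getting these offsets right is exactly what produces the stated parameters. I would finally check that the population components sum to $2^n - 1$ in both the two- and three-colour cases, confirming that no distance class is double counted --- the disjointness that licenses the (multivariate) hypergeometric model in the first place.
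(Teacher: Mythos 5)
Your proposal is correct and follows essentially the same route as the paper's own proof: an urn model without replacement, conditioning on the reference state $k \in T$ (which reduces the population to $2^n-1$ and the sample to $\ab|T|-1$), the distance classes of size $\binom{n}{d}$ as colours, and the same case analysis for the degeneracies $d=0$ and $d_1=d_2$. If anything, your bookkeeping of the sample size as $\ab|T|-1$ throughout is slightly more consistent than the paper's prose, which at one point speaks of a sample of size $\ab|T|$ while its own formula uses $\ab|T|-1$.
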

\end{widetext}

\begin{proof}
    This proof is structured in two parts. We start by showing the correctness of \cref{eq:Pd}, which in turn necessitates distinguishing between two cases. As $\#_d(k)$ describes the distance relationship of a random state in $T$ to all other states in $T$ (including itself), $\#_0(k)$ is always 1, from which the first two cases of \cref{eq:Pd} follow. For $d \geq 0$, we need to 
    consider the sampling process of $T$. Assume we start with just one random state in $T$, and then sample the rest. This is described by an urn model with $X_1 = \binom{n}{d}$ balls of colour $d$, and $X_2 = 2^n - 1 - \binom{n}{d}$ other balls. Note that by picking a random reference state in $T$, there are only $2^n - 1$ balls left in the urn to draw the $\ab|T| - 1$ states left to fill the target space. Therefore, the probability of sampling $x$ states of Hamming distance $d$ from the random initial state is given by
    \begin{displaymath}
    P\ab(\#_d (k) = x) = \hypgeom\bigl(\ab(X_1, X_2), 
                                    \ab(k, \ab|T| - 1 - x)\bigr),
    \end{displaymath}
    which concludes the proof of \cref{eq:Pd}.

    We need to consider three cases for \cref{eq:Pd1d2}. If $d_1 = 0$, the only possible samples $(x_1, x_2)$ must have $x_1 = 1$, with the same reasoning as above. Thus, $P\ab(\#_{d_1} (k) = x_1, \#_{d_2} (k) = x_2) = \delta_{1, x_1} P\ab(\#_{d_2} (k) = x_2)$, where $\delta_{i,j}$ is the Kronecker delta function. Further note that $\delta_{1, x_1} = P\ab(\#_{d_1} (k) = x_1)$. The case of a vanishing second argument, $d_2 = 0$, is symmetric to $d_1 = 0$. Therefore, if follows that $P\ab(\#_{d_1} (k) = x_1, \#_{d_2} (k) = x_2) = P\ab(\#_{d_1} (k) = x_1)P\ab(\#_{d_2} (k) = x_2)$ iff $d_1 = 0$ or $d_2 = 0$. If $d_1 = d_2$,
    then obviously $x_1 = x_2$ needs to be satisfied, in which case $P\ab(\#_{d_1} (k) = x_1, \#_{d_2} (k) = x_2)$ reduces to $P\ab(\#_{d_1} (k) = x_1) = P\ab(\#_{d_2} (k) = x_2)$. For all other $d_1$ and $d_2$, re-consider the sampling process from an urn without replacement: This time, we are interested in two colours $d_1$ and $d_2$. Consequently, the urn contains $\binom{n}{d_1}$ balls of colour $d_1$, $\binom{n}{d_2}$ balls of colour $d_2$, and $2^n - 1 - \binom{n}{d_1} - \binom{n}{d_2}$ other balls. The probability of obtaining $k_1$ states of Hamming distance $d_1$ and $k_2$ states of Hamming distance $d_2$ in a sample of size $\ab|T|$ is therefore given by
    \begin{displaymath}
        P\ab(\#_{d_1} (k) = x_1, \#_{d_2} (k) = x_2) = \hypgeom\ab(\bm{X}, \bm{x})
\end{displaymath}
    with $\bm{X} = \ab(\binom{n}{d_1},\binom{n}{d_2}, 2^n - 1 - \binom{n}{d_1} - \binom{n}{d_2})$ and $\bm{x} = \ab(x_1, x_2, \ab|T| - 1 - x_1 - x_2)$.
\end{proof}

Now that we know the distribution of our random variables $\#_d (k)$, we can deduce properties like expected values and covariances. In \cref{thm:Ed}, we use \cref{thm:Pd} to determine $E_T \ab(\#_d (k))$ and $\Cov\ab(\#_{d_1} (k), \#_{d_2} (k))$.

\begin{lemma}
\label{thm:Ed}
\begin{equation}
    \label{eq:Ed}
    E_T\ab(\#_d (k)) = \begin{cases}
        1 &\quad d = 0 \\
        \ab|T| \frac{\binom{n}{d}}{2^n} &\quad \text{otherwise}
    \end{cases}
\end{equation}

\begin{equation}
    \label{eq:Covd1d2}
    \begin{aligned}
    &\Cov\ab(\#_{d_1} (k), \#_{d_2} (k)) = \\ 
    &\begin{cases}
        0 &\quad d_1 = 0 \vee d_2 = 0 \\
        \ab|T| \frac{2^n - \ab|T|}{2^n - 1} \frac{\binom{n}{d}}{2^n} \ab(1 - \frac{\binom{n}{d}}{2^n}) &\quad d_1 = d_2 = d > 0\\
        -\ab|T| \frac{2^n - \ab|T|}{2^n - 1} \frac{\binom{n}{d_1} \binom{n}{d_2}}{2^{2n}} &\quad \text{otherwise}
    \end{cases}
    \end{aligned}
\end{equation}
\end{lemma}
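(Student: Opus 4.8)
The plan is to obtain both closed forms in \cref{thm:Ed} by reducing them to the first and second moments of the (multivariate) hypergeometric laws already established in \cref{thm:Pd}, treating the distances $d=0$ and $d>0$ separately throughout. Since $k$ is by construction the unique element of $T$ at Hamming distance $0$ from itself, $\#_0(k)=1$ holds deterministically; this immediately yields the $d=0$ branch of \cref{eq:Ed} and forces every covariance involving a vanishing index to be zero, because a degenerate random variable is uncorrelated with everything. All remaining work therefore concerns strictly positive distances, where $\#_d(k)$ is a genuine hypergeometric count.

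For the expectation in the case $d>0$, I would in principle evaluate $E_T\ab(\#_d(k)) = \sum_x x\,P\ab(\#_d(k)=x)$ directly from the univariate law \cref{eq:Pd}. Rather than resum the binomials by hand, the cleaner route is the indicator decomposition $\#_d(k)=\sum_{z:\,d_H(z,k)=d}\ind_T(z)$: there are exactly $\binom{n}{d}$ states at distance $d$ from $k$, and each lies in the uniformly sampled target set with marginal probability $\ab|T|/2^n$, so linearity of expectation gives $E_T\ab(\#_d(k))=\binom{n}{d}\,\ab|T|/2^n$, matching \cref{eq:Ed}.

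For the covariances I would use $\Cov\ab(\#_{d_1}(k),\#_{d_2}(k)) = E_T\ab(\#_{d_1}(k)\#_{d_2}(k)) - E_T\ab(\#_{d_1}(k))\,E_T\ab(\#_{d_2}(k))$ together with the joint law \cref{eq:Pd1d2}. The case $d_1=0\vee d_2=0$ is settled by the degeneracy noted above. The diagonal case $d_1=d_2=d>0$ is simply the variance of one hypergeometric count, which carries the finite-population correction $\tfrac{2^n-\ab|T|}{2^n-1}$ and reproduces the middle branch of \cref{eq:Covd1d2}. For distinct $d_1\neq d_2$ (both positive), the relevant law is the genuinely multivariate hypergeometric with group sizes $\binom{n}{d_1}$ and $\binom{n}{d_2}$; its cross-covariance is negative, and substituting total population $2^n$, sample size $\ab|T|$, and the two group sizes into the standard formula $\Cov=-\,\ab|T|\,\tfrac{\binom{n}{d_1}}{2^n}\tfrac{\binom{n}{d_2}}{2^n}\tfrac{2^n-\ab|T|}{2^n-1}$ yields the final branch.

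The main obstacle I anticipate is the bookkeeping of the reference state: in \cref{eq:Pd} the urn has already been reduced to $2^n-1$ balls and $\ab|T|-1$ draws after singling out $k$, so one must be consistent about whether the $d=0$ colour is removed from the pool and then carry the resulting off-by-one through the algebra without corrupting the clean $\ab|T|/2^n$ factor and the finite-population term. The second delicate point is purely computational: fixing the sign and the $(2^n-1)$ denominator of the multivariate cross-covariance, since these are precisely where hypergeometric second-moment derivations are most error-prone.
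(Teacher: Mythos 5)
Your route is essentially the paper's own: dispose of $d=0$ by determinism of $\#_0(k)$, read the $d>0$ expectation off a hypergeometric first moment, obtain the diagonal case as a hypergeometric variance with finite-population correction, and the off-diagonal case as the multivariate hypergeometric cross-covariance. The indicator decomposition $\#_d(k)=\sum_{z:\,d_H(z,k)=d}\ind_T(z)$ is a somewhat cleaner way to organise the first-moment computation than the paper's direct appeal to the distribution of \cref{thm:Pd}, but it is not a different argument.

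The obstacle you flag in your closing paragraph is, however, a genuine gap, and your proposal does not close it---indeed it cannot be closed in the form you hope. Under the model actually established in \cref{thm:Pd}, the reference state $k$ is conditioned to lie in $T$ and the remaining $|T|-1$ states are drawn without replacement from the other $2^n-1$ states; the conditional marginal probability that a fixed state at distance $d>0$ from $k$ belongs to $T$ is therefore $(|T|-1)/(2^n-1)$, not $|T|/2^n$. Carrying your indicator argument through consistently with \cref{thm:Pd} yields $E_T(\#_d(k))=\binom{n}{d}(|T|-1)/(2^n-1)$, and second moments with population $2^n-1$ and sample size $|T|-1$, none of which coincide with \cref{eq:Ed} and \cref{eq:Covd1d2}. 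The stated formulas are exactly those of the \emph{unconditioned} urn (population $2^n$, sample size $|T|$, reference state external to the sample), and your derivation reaches them only by silently switching models mid-proof---which mirrors a tension inside the lemma itself, since the $d=0$ branch requires the conditioned model while the $d>0$ branches are those of the unconditioned one. For the record, the paper's proof carries the same blemish: it cites the \cref{thm:Pd} distribution and then asserts the mean $|T|\binom{n}{d}2^{-n}$ \enquote{as required}, which is not the mean of that distribution. The two models differ only by corrections of relative order $2^{-n}$, so nothing downstream is materially affected, but a proof that claims to reduce the lemma to \cref{thm:Pd} must either accept the shifted constants $(|T|-1)/(2^n-1)$ or state explicitly that the conditioning on the reference state is being dropped.
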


\begin{proof}
    In large, this proof follows the structure of the proof of \cref{thm:Pd}. We start by show the correctness of \cref{eq:Ed}. If $d = 0$, we trivially have $E\ab(\#_0) = 0 P\ab(\#_0 = 0) + 1 P\ab(\#_0 = 1) = 1$. For all $d > 0$, $P\ab(\#_d (k) = x)$ is described by the Hypergeometric distribution as in the second case of \cref{eq:Pd}, with an expected value of $\ab|T| \binom{n}{d} 2^{-n}$ as 
    required.

    We now prove \cref{eq:Covd1d2}. Assume that $d_1 = 0$, then with $\#_{d_1 d_2}(k) \coloneqq \#_{d_1} (k)\#_{d_2} (k)$ we have
    \begin{alignat*}{2}
        E\ab(\#_{d_1 d_2} (k)) &= &&\sum_k 0 \cdot \#_{d_2} (k) P\ab(\#_{d_1} (k) = 0, \#_{d_2} (k) = x) + \\ 
                                       &  &&\sum_k 1 \cdot \#_{d_2} (k) P\ab(\#_{d_1} (k) = 1, \#_{d_2} (k) = x) \\
                                       &= &&\sum_k \#_{d_2} (k) P\ab(\#_{d_1} (k) = 1, \#_{d_2} (k) = x)
    \end{alignat*}
    and of course $P\ab(\#_{d_1} (k) = 1) = 1$ for $d_1 = 0$ and thus $E\ab(\#_{d_1} (k) \#_{d_2} (k)) = \sum_k \#_{d_2} (k) P\ab(\#_{d_2} (k) = k) = E\ab(\#_{d_2} (k))$. Together with \cref{eq:Ed}, it follows that
    \begin{displaymath}
    \begin{aligned}
        \Cov\ab(\#_{d_1} (k), \#_{d_2} (k)) = &E\ab(\#_{d_1} (k) \#_{d_2} (k)) - \\
                                              &E\ab(\#_{d_1} (k)) E\ab(\#_{d_2} (k)) \\
                                    = &E\ab(\#_{d_2} (k)) - E\ab(\#_{d_2} (k)) = 0.
    \end{aligned}
    \end{displaymath}
    The case for $d_2 = 0$ is symmetric. We conclude that $\Cov\ab(\#_{d_1} (k), \#_{d_2} (k)) = 0$ iff $d_1 = 0$ or $d_2 = 0$. Note that $\Cov\ab(\#_d (k), \#_d (k)) = \Var(\#_d (k))$, which is given by
    \begin{displaymath}
        \Var(\#_d (k)) = \ab|T| \frac{2^n - \ab|T|}{2^n - 1} \frac{\binom{n}{d}}{2^n} \ab(1 - \frac{\binom{n}{d}}{2^n})
    \end{displaymath}
    since the sampling process is described by a multivariate hypergeometric distribution as shown in \cref{thm:Pd}. The third and second case also directly follow from the covariance of the multivariate hypergeometric distribution from \cref{thm:Pd}.
\end{proof}

With the two lemmas above, we can efficiently calculate $\tilde{E}\ab(F_1)$. This demonstrates a use case that benefits from our approximation theorem: Given a problem for which the target space structure with respect to Hamming distances between targets can be modelled analytically, the expected optimisation landscape can be approximately derived solely from this model. While instance specific analytic formulations of $F_1$ are known for the Ising model problem \cite{Ozaeta:2022} from a physics-centric
point of view, our approach accesses the problem from a previously unexplored angle 
that benefits from insights from theoretical computer science. We also conclude that an efficient description of such models allows for an efficient approximation of a \QAOA optimisation landscape. Further investigations into the complexity theoretical implications of this could provide insights into the link between the structure target spaces, the complexity of problems and the computational power of \QAOA, albeit we need to leave these questions to future research, as our primary goal in this paper is to establish the framework and derive direct practical utility.

\subsubsection{Sampling Approach}\label{sec:uniform_sampled_sampling}
Approximating $E\ab(F_1)$ with an efficient theoretical model of the target space is obviously the preferred way to address a given problem, if feasible analytically. However, it is also possible to determine $\ab\{E\ab(\overline{\#_d (k)})\}_{d=0}^n$ and $\ab\{E\ab(\overline{\#_{d_1} (k) \#_{d_2} (k)})\}_{d_1 , d_2 = 0}^n$ empirically by sampling from a set of problem instances. Above, we described the problem instances of the uniform sampling example by giving a sampling routine for a random target space. Recall that this defines problem instances as random variables. If we now want to determine the expected values and covariance matrix empirically, we need a concrete sample of instances. Which in this example will be a set of realisations of problem random variables. 

In our experiments, we sampled 500 random instances for each state space dimension from $8 \leq n \leq 11$. The upper bound of $n = 11$ keeps computational cost of explicitly evaluating $F_1$ at a reasonable level, whereas the lower bound $n = 8$ ensures a sufficiently large state space. For the scope of this paper, this range is sufficient to show the effects of increasing the state space dimension. Every instance has a target set of $2^{n-1}$ states, making it scale with the size of the whole state space. As defined above, $\overline{\#_d (k)} = \frac{1}{\ab|T|}\sum_{k \in T} \#_d (k)$. Let $\mathcal{T}$ be the set of target sets of all sampled instances. Then, empirically determining $E\ab(\overline{\#_d (k)})$ means calculating the mean of $\overline{\#_d (k)}$ over the set $\mathcal{T}$ of all sampled target sets. To clarify over which set a mean is calculated, we introduce the notation $\overline{f(x)}^A = \frac{1}{|A|} \sum_{x \in A} f(x)$. Then,
\begin{displaymath}
\overline{\overline{\#_d (k)}^T}^\mathcal{T} = \frac{1}{|\mathcal{T}|}\sum_{T \in \mathcal{T}} \frac{1}{|T|} \sum_{k \in T} \#_d (k).
\end{displaymath}
The same holds for $E\ab(\overline{\#_{d_1}(k) \#_{d_2} (k)})$, with 
\begin{displaymath}
    \overline{\overline{\#_{d_1} (k) \#_{d_2}(k)}^T}^\mathcal{T} = \frac{1}{|\mathcal{T}|}\sum_{T \in \mathcal{T}} \frac{1}{|T|} \sum_{k \in T} \#_{d_1} (k) \#_{d_2} (k).
\end{displaymath}

\begin{figure*}
    \centering
    \includegraphics{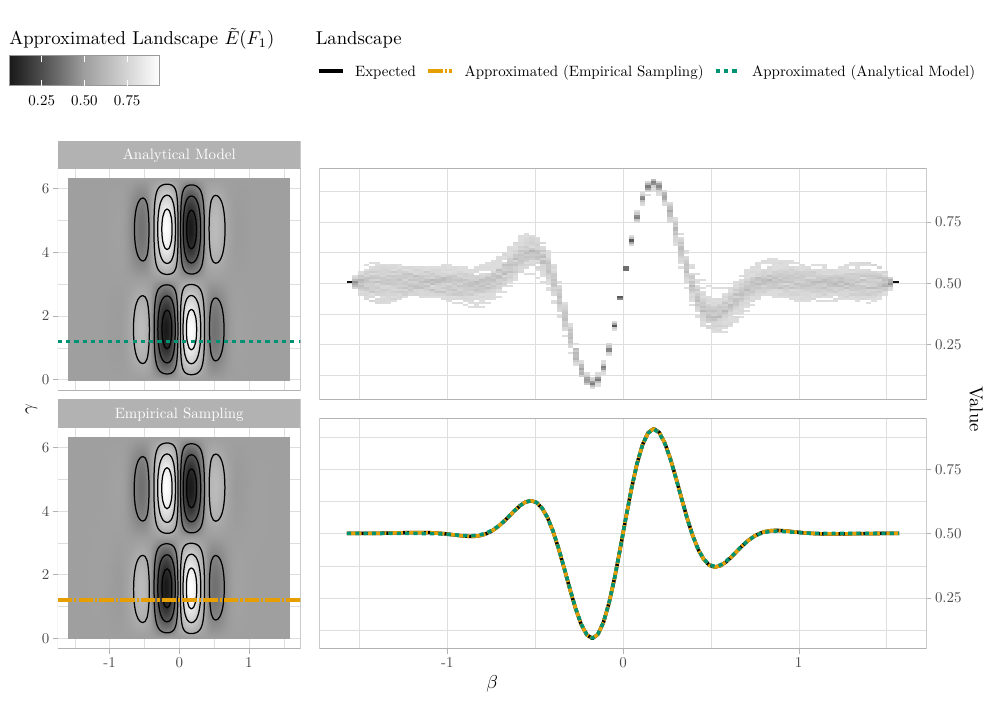}\vspace*{-1.5em}
    \caption{Comparison of analytic and sampling-based landscape approximation for the uniform sampling experiment.
    \emph{Left:} Landscapes for sampling methods. The difference between the analytically modelled and the empirically sampled landscape approximations is negligibly small.
    \emph{Top Right:} A two-dimensional heat-map \sqboxsquare{lfd3} (grey) that illustrates the vertical distribution of $F_1(\mangle, \pangle_0 = 1.2)$ values of 500 random instances sampled as described in \cref{sec:uniform_sampled_sampling}. The heat-map is calculated over a grid of $100 \times 100$ bins, where each column collects all $F_1$ values in the corresponding $\mangle$ interval and sums up to 1.
    \emph{Bottom Right:} Cross-section through the above landscape at $\pangle_{\text{c}} = 1.2$. The expected landscape estimated by the mean value of the data points displayed in the heat map above \sqbox{black} (black line) and two approximations of $F_1$, where $\ab\{\overline{\#_d (k)}\}_{d = 0}^n$ and $\ab\{(\overline{\#_{d_1} (k)}, \overline{\#_{d_2} (k)})\}_{d_1, d_2 = 0}^n$ are determined empirically \sqbox{lfd2} (orange line) and modelled theoretically \sqbox{lfd4} (green line) are shown. Again, exact and approximate results are in excellent agreement, with little variation.}
    \label{fig:uniform_analytic}
\end{figure*}

\subsubsection{Comparison}
As we can see in \cref{fig:uniform_analytic}, both approaches lead to approximations that fit the expected value of $F_1$ exceptionally well. In all our experiments we used the sample mean (over all instances) as an estimator for the expected landscape $E\ab(F_1)$. We calculated the mean of $F_1$ by evaluating $F_1\ab(\pangle, \mangle)$ for every instance at 100 sample points for $0 \leq \mangle \leq \pi$ at the cross section for $\pangle = 1.2$. This shows the versatility of our approximation: Theoretical models can be used to get excellent results. This is not always feasible for more complex problems where the inherent structural properties of the target space are not as straightforward. However, we also showed that an empirical approach leads to equally excellent results in such cases. We therefore argue that our approximation theorem can be a useful tool that is equally well suited for theoretical considerations and empirical analyses. 

\subsection{Clustered Sampling}
\begin{figure*}[htbp]
    \includegraphics{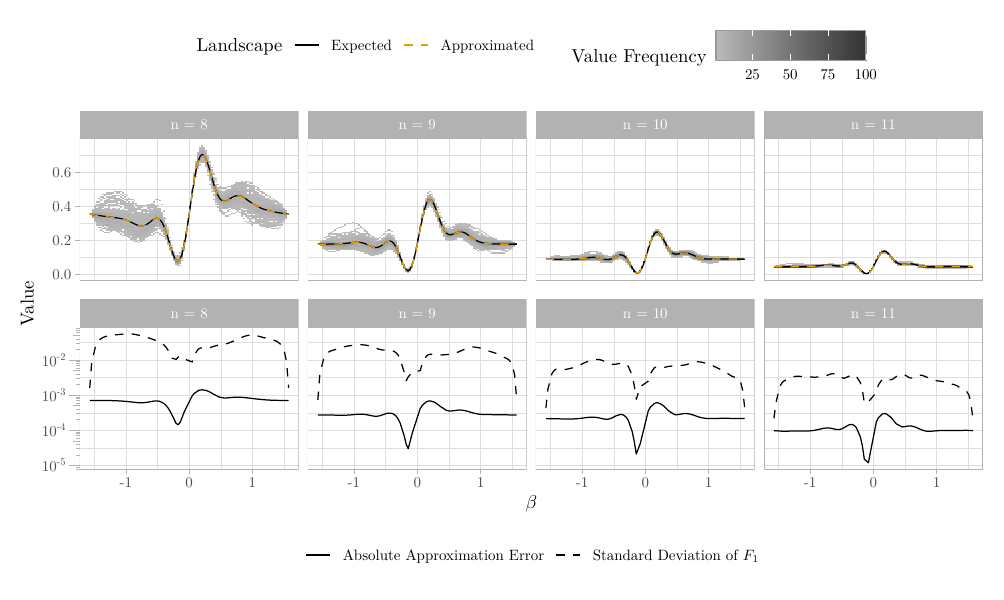}\vspace*{-2em}
    \caption{Landscape cross-sections and approximation quality for clustered sampling for different state space dimensions 
    in columns. \emph{Top:} Two-dimensional heat-map of sampled values for $F_1 (\mangle, \pangle_{c})$ at $\pangle_{c} = 1.2$ and across $0 \leq \mangle \leq \pi$, overlaid with the expected landscape estimated by the mean value of the sampled data points $\overline{F_1}$ \sqbox{black} (black line) and the approximated expectation $\tilde{E}\ab(F_1)$ \sqbox{lfd2} (orange line). \emph{Bottom:} Absolute approximation error (solid line) and standard deviation of $F_1$ (dashed line). The data demonstrate a small approximation error that lies strictly and considerably below the standard deviation of $F_1$.}
    \label{fig:clustered_dims}
\end{figure*}

The uniform sampling example was chosen deliberately trivial to showcase the analytical approach. As a consequence, one important aspect of more realistic problems remains unaddressed: Owing to the uniform sampling process, the states in $T$ are independent of each other. This is, in general, not the case for real-world problems. \SAT instances, for example, often have clustered solution spaces. Industrial \SAT instances exhibit community-like structures in their solution spaces~\cite{AnsoteguiGL12}. But contrary to intuition, even uniformly sampled \SAT instances possess solution structures~\cite{PariLYQ04}. Inspired by this observation, we conduct further experiments with a random clustered sampling process: Three cluster seeds are sampled uniform at random from the complete state space. Then, per seed, 30 new states are added to the target set. Each state is reached by a random walk starting from its corresponding seed by flipping a random bit each step, with a step probability of \sfrac{1}{2}. Note that reaching a basis state $\ket|k>$ from another basis state $\ket|z>$, with $k, z \in \mathds{F}_2^n$ means that qubit $l$ was flipped if and only if $k_l \neq z_l$. The process of empirically determining $\ab\{E\ab(\overline{\#_d (k)})\}_{d=0}^n$ and $\ab\{E\ab(\overline{\#_{d_1} (k) \#_{d_2} (k)})\}_{d_1, d_2 = 0}^n$ stays the same as in the uniform example above.

\Cref{fig:clustered_dims} shows the empirical estimate of $E\ab(F_1)$ compared to the actual mean value $\overline{F_1}$ and the introduced absolute error. As can be seen, we clearly introduce some error by approximating. But although the variance of $F_1$ drops significantly with higher dimensions, the absolute error never surpasses the standard deviation of $F_1$. Also note how the whole co-domain of $F_1$ is compressed. This is a direct consequence of the scaling coefficient \(\ab|T|/2^n\) in \cref{eq:F_1} as we chose to generate fixed dimension independent sized target sets. Recall that $E\ab(F_1)$ describes the result probability of sampling a solution state from $\ket|\pangle, \mangle>_1$, which unveils an interesting relation between the size of the solution space and the solution sample probability.

\subsection{Boolean Satisfiability}
\label{sec:sat_example}

\begin{figure}[htbp] \includegraphics{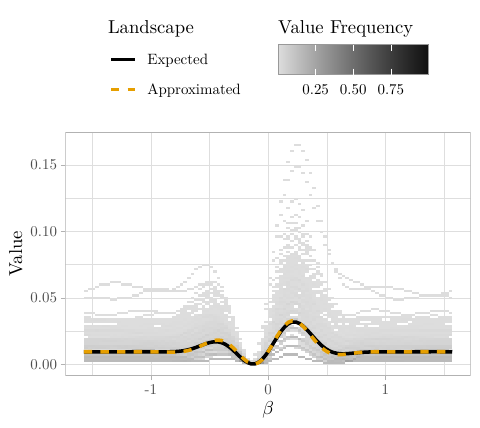}\vspace*{-1em}
    \caption{Approximate expected value $\tilde{E}(F_1)$ of $F_1\ab(\mangle, \pangle)$ for a random \SAT instance perfectly fits the actual expected landscape $E\ab(F_1)$ over all sampled instances, even though \SAT induces a higher variance in the values of $F_1$. This is illustrated by a cross section at $\pangle = \pangle_{0} = 1.2$. 
    The two-dimensional heat-map \sqboxsquare{lfd3} (gray) illustrates the vertical distribution of $F_1(\mangle, \pangle_0)$ over a grid of $100 \times 100$ bins, where each column collects all $F_1$ values in the corresponding $\mangle$ interval and sums up to 1. This histogram is overlaid by the expected Landscape $E\ab(F_1(\mangle, \pangle_{0}))$ \sqbox{black} (black, solid) and the approximated mean value $\tilde{E}(F_1\ab(\mangle, \pangle_{0}))$ \sqbox{lfd2} (orange, dashed) obtained from the approximation theorem. All three quantities are in very good agreement.}
    \label{fig:sat_cross_section}
\end{figure}

After two artificially constructed sampling examples, it is time to consider our first real problem. While \SAT is one of
the cornerstones of \NP-complete problems in theoretical computer science, it also has considerable practical impact~\cite{Ganesh:2020},
and has also been intensively studied in the 
quantum 
domain~\cite{Gabor:2019,Sax:2020,Krueger:2020,Willsch:2022,Zielinski:2023}.

Most importantly, it is known to often exhibit structured solution spaces~\cite{achlioptas2011solution}, and is a natural combinatorial decision problem, which makes it perfectly suited for our framework. 

Note that in both above sampling examples, we only discussed the target set and completely disregarded any actual realisation of the \QAOA circuit in our analysis. However, if the complete target set were known, it would be trivial to construct a constraint Hamiltonian $C = \prod_{k \in T} \ketbra|k><k|$ by simply projecting onto every target state. This is obviously not an adequate approach for hard computational problems, as full knowledge about the solution space cannot be expected.
Unfortunately, explicitly projecting onto every state in $T$ violates this requirement for difficult problems. We therefore show now that we still can apply our approximation theorem even if this requirement is added to the picture.

Let $\mathcal{C} = \ab\{c_i\}_{i=1}^m$  a Boolean formula in conjunctive normal form, with $c_i$ being the $i$-th clause. The \SAT problem asks whether there exists a Boolean variable assignment such that all clauses are satisfied. Each clause is a disjunction of literals. A literal is a possibly negated Boolean variable. Let's consider the clause $c = x_1 \vee x_3 \vee \overline{x_8}$. Then $c$ is satisfied by all possible assignments except for the characteristic unsatisfying assignment $\ab[x_1 \mapsto 0, x_3 \mapsto 0, x_8 \mapsto 1]$. Let $\overline{C_i}$ be a projector onto this characteristic unsatisfying assignment of the clause $c_i$. Given $\{\overline{C_i}\}_{i=0}^m$ we can construct a constraint Hamiltonian $C = \prod_{i = 0}^m (\mathds{1} - \overline{C_i})$ that projects onto all satisfying assignments of $\mathcal{C}$. This allows us to just focus on the solution space, and set the stage for 
using the approximation theorem.

For our experiments, we generated 500 random satisfiable \SAT instances for each $n \in \{8, 9, 10, 11\}$, where the number of variables directly translates to the dimension ($|V| = n$) and the number of clauses is always $|C| = 4n$. Each clause has three literals that are negated with probability \sfrac{1}{2}. Glucose 4.2~\cite{Audemart:2009} was used to enumerate all satisfying assignments for each instance. In contrast to the previous sampling examples, where we always sampled a target set of fixed size $|T|$, the size of $T$ depends on the instance and its number of satisfying assignments in this scenario. Therefore, we also have to determine $|T|$ empirically.

In \cref{fig:sat_cross_section} we see once again how well our approximation fits the real mean values for $F_1$. Note that compared to the examples above, there seems to be a unusual amount of variance on the y-axis. This is a result of different target set sizes $|T|$ across instances, which causes global vertical shifts of function $F_1$. If one is just interested in the overall structure of $F_1$ one could consider ignoring the scaling factor induced by $\ab|T|$ altogether. 

\subsection{\kclique}
\label{sec:kclique}
An efficient in-place projection like the \SAT constraint Hamiltonian can not always be expected. So what if ancilla bits are needed? In this example, we demonstrate an approach following \cref{thm:ancilla_invariance} by showing how to construct an ancilla register independent of the constraint Hamiltonian for \kclique. 

\begin{definition}
    Given a graph $G$ and $k \in \mathds{N}$, then the decision problem whether $G$ has a clique (\ie, a fully connected sub-graph) of size $k$ is called the \kclique Problem.
\end{definition}

To solve \kclique with \QAOA, we first need to define the state space. Let $G = (V, E)$ be a graph with $n$ vertices, then we have a $n$-dimensional state space as we map each vertex to a specific qubit. A basis state $\ket|z>$ with $z \in \mathds{F}_2^n$ marks vertices such that a vertex $v_i$ is marked iff $z_i = 1$. Then, $\ket|z>$ represents a valid solution to \kclique iff $K = \{v_i \in V \mid z_i = 1\}$ is a clique in G and $|K| = k$. 

Let $\overline{G}$ be the complement of $G$ and $(i, j)$ an edge in $\overline{G}$. Then,
\begin{equation}
\begin{gathered}
    \Ccliques = \prod_{(i,j) \in \overline{G}} \ab(\mathds{1} - P_{ij}) \\
    \text{with} \quad P_{ij} = \mathds{1}_{0, i - 1} \otimes \ketbra|1><1| \otimes \mathds{1}_{i+1,j-1} \otimes \ketbra|1><1| \otimes \mathds{j}_{j+1,n}
\end{gathered}
\end{equation}
projects onto all cliques in $G$. However, this also includes cliques of sizes different from $k$, such as trivial cliques as single vertices and edges. A second step is needed to filter out the $k$-cliques. For this we have to define an unitary operator $D_H$ that calculates the Hamming weight of $\ket|z>$. This is done by writing $d_H(z)$ on an ancilla register but note that $\ket|z>\ket|y> \mapsto \ket|z>\ket|d_H (z)>$ is only invertible for a fixed $y$ (\eg, $y=0$), which conflicts with $D_H$ being unitary. Thus, we define $D_H: \mathcal{B}^{\otimes n} \otimes \mathcal{B}^{\otimes \lceil \lb(n) \rceil} \to \mathcal{B}^{\otimes n} \otimes \mathcal{B}^{\otimes \lceil \lb(n) \rceil}$ as $D_H : \ket|z>\ket|y> \mapsto \ket|z>\ket|y + d_H (z) \mod 2^m>$ with $m \coloneqq \lceil\lb(n)\rceil + 1$. Since $\mathds{N}/m\mathds{N}$ forms a group under addition, there exists a unique inverse element for each $d_H(z)$ which ensures $D_H$ to be invertible. Our construction is illustrated in \cref{fig:kclique_DH}.

With $D_H$ we can construct a second constraint Hamiltonian $\Cdhk = D_H^\dagger (\mathds{1}^{\otimes n} \otimes \ketbra|k><k|) D_H$ that projects onto the space spanned by $\{\ket|z>\ket|y> \mid y + d_H (z) \mod 2^m = k\}$. Therefore, if the ancilla register is initialised to $\ket|\bm{0}>$ the application of $\Cdhk$ projects onto states with Hamming distance $k$. Applying both projectors results in
\begin{equation}
\begin{gathered}
    C \coloneqq \Cdhk \ab(\Ccliques \otimes \mathds{1}^{\otimes \lceil \lb(n) \rceil}) \quad \text{with}\\
    C\ket|z>\ket|\bm{0}> = \begin{cases}
        \ket|z>\ket|\bm{0}>  &\quad z \text{ is k-clique in } G \\
        0                        &\quad \text{otherwise}
    \end{cases}
\end{gathered}
\end{equation}

This results in a \QAOA state $\ket|\mangle, \pangle>_1 = e^{-i \mangle X^{\otimes n}} e^{-i \pangle C} \frac{1}{\sqrt{2^n}} \ket|+>^{\otimes n}\ket|\bm{0}>$ after the phase separation introduced by $e^{-i \pangle C}$ the state evolves to $\frac{1}{\sqrt{2^n}} e^{-i \mangle X^{\otimes n}} \ab( e^{-i \pangle} \sum_{z \in K} \ket|z>\ket|\bm{0}> + \sum_{z \notin K} \ket|z>\ket|\bm{0}>)$ finally, after factoring out the ancilla register and applying the mixer we end up with
\begin{displaymath}
\begin{aligned}
    \frac{1}{\sqrt{2^n}} \Biggl(&e^{-i \pangle} \sum_{z \in K} \sum_{x \in \mathds{F}_2^n} f(\mangle, z, x) \ket|x>  + \\ 
                                &\sum_{z \notin K} \sum_{x \in \mathds{F}_2^n} f(\mangle, z, x) \ket|x>\Biggr) \otimes \ket|\bm{0}>
\end{aligned}
\end{displaymath}
Note that this \QAOA circuit is invariant on the ancilla register, which 
allows us to trace the ancilla qubits out. Then, $\ket|\mangle, \pangle>_1$ 
is identical to \cref{eq:qaoa_state_1_mixer_applied}. The ancilla qubits have 
no influence on the distribution of Hamming weights. Therefore, they can be 
ignored and once again we only have to reason about the problem specific 
solution space.

\begin{figure*}[htbp]
    \centering
    \includegraphics{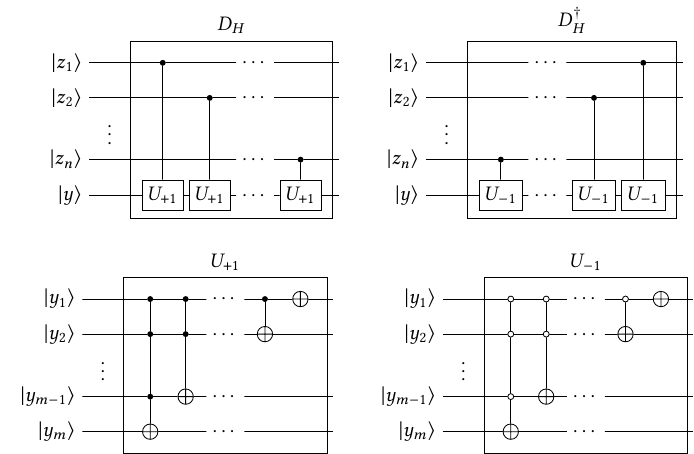}
    \caption{Circuits for Hamming weight computation. $D_H$ is used to construct the constraint Hamiltonian for the \kclique problem. It is used to check whether a potential clique is of size $k$ or not. For this $D_H$ adds the Hamming weight of register $\ket|\bm{z}>$ to another register $\ket|\bm{y}>$, while $D_H^\dagger$ subtracts it from $\ket|\bm{y}>$. State $\ket|\bm{y}>$ is initialised with $\bm{y} = 0$ and holds the Hamming weight of $\bm{z}$ encoded as a bit string, after the application of $D_H$. The $D_H$ gate adds the Hamming weight to $\ket|\bm{y}>$ by conditionally applying $U_{+1}$ to it for each qubit in $|\bm{z}>$, with $U_{+1}\ket|\bm{y}> = \ket|\bm{y} + 1>$. The inverse operation $D_H^\dagger$ subtracts the Hamming weight of $\bm{z}$ from $\ket|\bm{y}>$, by conditionally applying $U_{-1}$ for each qubit, with $U_{-1}\ket|\bm{y}> = \ket|\bm{y} - 1>$.}
    \label{fig:kclique_DH}
\end{figure*}

\subsection{One-Way Functions}
\label{sec:one_way_functions}

\Cref{thm:approx} is especially useful if we either have a theoretical understanding of the solution space of a problem or if the solution space can be efficiently sampled. This is the case for one-way functions: Instances can be easily generated by first picking a state from solution space and then generating the original input instance by applying the inverse of the one-way function, as it is easy to compute by definition. We will showcase this scenario with \qrfactoring.
\begin{definition}
    Given a value $x = qr$ with $q, r \in \mathds{P}$, \qrfactoring is the problem of computing $q$ and $r$ given $x$.
    \begin{equation}
        \label{eq:fqr}
        f_{qr} : \ab\{qr \mid q, r \in \mathds{P}\} \to \mathds{P} \times \mathds{P} ,\quad f_{qr} : qr \mapsto (q,r)
    \end{equation}
\end{definition}
Obviously the inverse of \cref{eq:fqr} is simply the integer multiplication $f_{qr}^{-1}(q,r) = q \cdot r$. So, given a pre-computed set of primes $\mathcal{P} \subset \mathds{P}$, we can efficiently sample a pool of \qrfactoring instances by $f_{qr}^{-1} \ab(\ab\{(q,r)_i\}_{i=1}^m)$ with $\ab(q,r)_i$ being sampled from $\mathcal{P} \times \mathcal{P}$ at uniform random for all $0 \leq i \leq m$. In fact, as we argued in \cref{sec:kclique}, the target space $T$ without ancillary qubits fully suffices for our approximate analysis. For \qrfactoring, a solution $(q,r)$ can be represented in $T$ as follows: Let $x_{(2)}$ be the binary representation of $x \in \mathds{N}$, and let $z \in \mathds{F}_2^{k}$ with $m \leq n$. Then $p_n(z)$ is a padding of $z$ with $n - k$ leading zeros. Now, $\ab\{(q,r)_i\}_{i = 1}^m$ is mapped to $T$ by $(q,r) \mapsto p_n(q_{(2)}) \circ p_n(r_{(2)})$ with $n = \max\ab\{\lceil \lb(q) \rceil, \lceil \lb(r) \rceil\}$, where $\circ$ denotes concatenation of two bit strings.

With this mapping we performed a series of experiments for different target space dimensions $n \in \ab\{12, 14, 16, 18\}$, where we sampled $\ab\{(q,r)_i\}_{i = 1}^{100}$ from $\mathcal{P}_{\frac{n}{2}} \times \mathcal{P}_{\frac{n}{2}}$, with $\mathcal{P}_{\frac{n}{2}} \coloneqq \ab\{q \in \mathds{P} \mid q < 2^{\frac{n}{2}} \}$. Again, $\ab\{E(\overline{\#_d (k)})\}_{d=0}^n$ and $\ab\{E\ab(\overline{\#_{d_1} (k) \#_{d_2} (k)})\}_{d_1, d_2 = 0}^n$ were empirically determined as described above. The cardinality of the target set is $\ab|T| = 2$, as $f_{qr}^{-1}(q,r) = f_{qr}^{-1}(r,q)$ and thus for each $x = qr$, solution and target space each contain exactly two elements. \Cref{fig:qr_factoring_n18} shows the approximated and actual mean of $F_1$ as well as the absolute approximation error for $n = 18$. Although the margin for error apparently is extremely tight for \qrfactoring, our approximation nicely lies on top of the actual mean. Again, the absolute error remains below the standard deviation. 
As is likewise visible in \cref{fig:qr_factoring_n18}, \qrfactoring exhibits only small  deviations between instances in their optimisation landscapes. But even with this small margin of error, the approximation tightly fits the real distribution, and the error is bounded by the standard deviation of $F_1$.

\begin{figure}[htbp]
    \includegraphics[trim=0em 1.5em 0 1.5em]{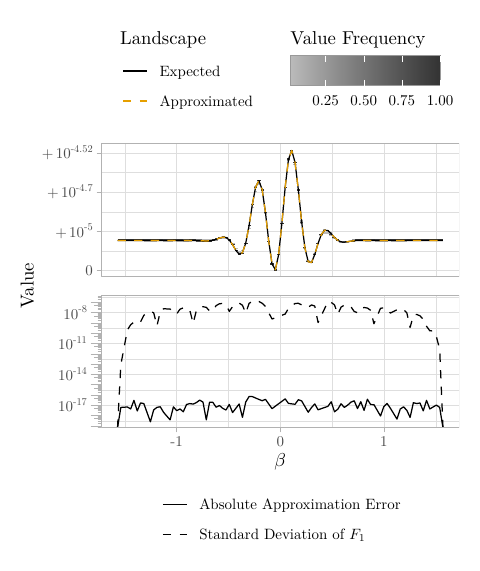}
    \caption{\emph{Top:} 2D histogram of the $F_1$ values for all instances for \(n=18\) from the \qrfactoring example, overlaid by  the approximated landscape $\tilde{E}(F_1)$ \sqbox{lfd2} (orange) and the expected landscape \(E(F_{1})\) \sqbox{black} (black). \emph{Bottom:} Absolute approximation error (solid line) compared to the standard deviation of $F_1$ (dashed line). The variance in $F_1$ is negligible, which makes the margin of error for approximations extremely tight. Our approximation provides excellent results even under such challenging conditions.}
    \label{fig:qr_factoring_n18}
\end{figure}

\section{Practical Utility}\label{sec:utility}
It is well known that the classical task of optimising parameters in \QAOA requires substantial amounts of computational effort: This aspect of the heuristic is \NP-hard~\cite{bittel_training_2021} even for classically tractable systems; likewise, every polynomial time algorithm is susceptible to instances for which the relative resulting error can be arbitrarily large, thus rendering approximate approaches likewise troublesome. The \QAOA quantum circuit, in addition, usually needs to be evaluated for many different sets of angles to gain the necessary information on the optimisation landscape that is required as input for the classical parameter optimisation routine, albeit efforts differ depending on the concrete choice~\cite{Periyasamy:2024,Thelen:2024,Singhal:2024}

\begin{figure*}[htbp]
    \centering
    \includegraphics{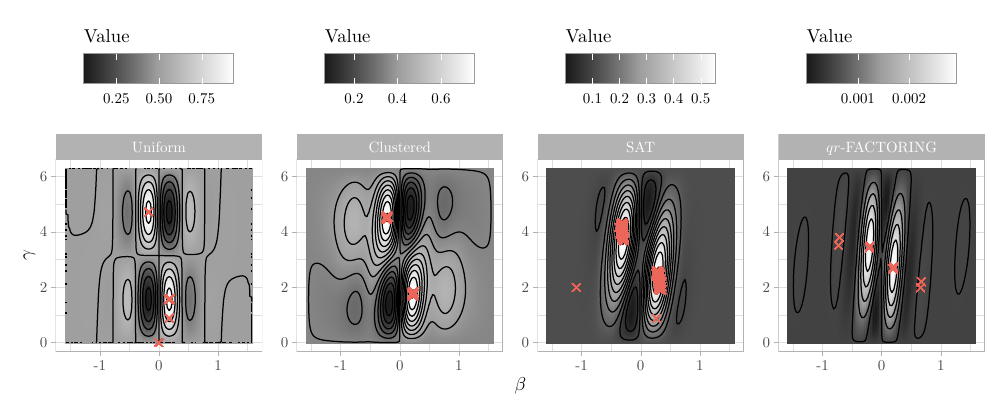}\vspace*{-1em}
   \caption{Each red cross marks a combination \((\mangle, \pangle\)) obtained 
   from the standard \QAOA algorithm for one single instance of the 
   subject problem. The parameter pairs are plotted on top the expected optimisation landscapes. As is visually apparent, optimal parameters for 
   different instances follow the \emph{problem-specific}, but \emph{instance-independent} patterns 
   expected from our considerations. Solutions for local optima
   of the instance-averaged landscape, as they arise for some \QAOA 
   executions, are typical artefacts expected from numerical 
   optimisation.}\label{fig:circopt_landscape}
\end{figure*}

Our approximation approach allows us to separate instance sampling from optimisation landscape sampling. Even at a fixed point $\ab(\mangle, \pangle)$, directly sampling $F_1 \ab(\mangle,\pangle)$ is intractable for
exact computation, as it contains a sum over exponentially many terms. Likewise, determining an expectation value $E\ab(F_1)$ requires to consider a substantial amount of points $\ab(\mangle, \pangle)$ and instances. However, given structural information about the target space in form of $\ab\{E\ab(\overline{\#_d (k)})\}_{d=0}^n$ and $\ab\{E\ab(\overline{\#_{d_1} (k) \#_{d_2} (k)})\}_{d_1, d_2 = 0}^n$, only a sum over linear many terms has to be evaluated to approximate $E\ab(F_1)$ efficiently at a point $\ab(\mangle, \pangle)$. The inputs $\ab\{E\ab(\overline{\#_d (k)})\}_{d=0}^n$ and $\ab\{E\ab(\overline{\#_{d_1} (k) \#_{d_2} (k)})\}_{d_1, d_2 = 0}^n$ of our approximation method can reasonably be obtained empirically by statistical methods, or even theoretically modelled as demonstrated in \cref{sec:uniform_sampled}. Additionally, \cref{thm:approx} takes $E\ab(\ab|T|)$ as an input. This quantity is in fact the expected value of the solution of the corresponding counting problem. Counting problems and approximate counting have been studied extensively by the theoretical computer science community since the early days \cite{stockmeyer1983complexity,dyer2003approximate,wei2005new}. In general even approximate counting is a hard problem in most cases. Luckily, $E\ab(\ab|T|)$ only acts as a scaling factor. In most cases where one is only interested in the overall structure of the optimisation landscape, \ie local optima, plateaus, etc. the global scaling factor can be discarded. Thus, an accurate approximation of $E\ab(\ab|T|)$ is not as important for the application of out main theorem.

The above mentioned separation of instance and object landscape sampling enables a different approach to \QAOA that does not require an unbounded iteration involving the quantum circuit. It starts by sampling (partial) target spaces of random instances, from which the structural metrics needed as input for \cref{thm:approx} are gathered. Then, a classical optimiser is utilised to determine the optimal \QAOA angles with regard to the approximate optimisation landscape, resulting from the previously sampled target spaces.
Because we used the expected landscape of a random instance of the problem at hand, the found parameters apply to the complete problem. Thus, parameter optimisation only needs to be performed once per problem to find one single set of parameters for
all instances.
Finally, the resulting angles are used to initialise a \QAOA circuit from which potential problem solutions will be sampled on quantum hardware. Following this approach, only the final sampling step needs to be performed on real quantum hardware. The standard \QAOA procedure is in fact more of a heuristic than a quantum algorithm, where we have to optimise a quantum circuit for each instance. In our approach to \QAOA we optimise the parameters on a error bounded approximation of the expected landscape. Thus, we end up with one circuit for all problem instances. This mathematically sound splitting of computation into a problem-global and instance-specific phase significantly moves \QAOA towards the realm of true quantum algorithms, in stark contrast to empirically motivated heuristics. Recall \cref{fig:application_overview}, where we illustrated the difference between both approaches.

\begin{figure}[htbp]
    \includegraphics{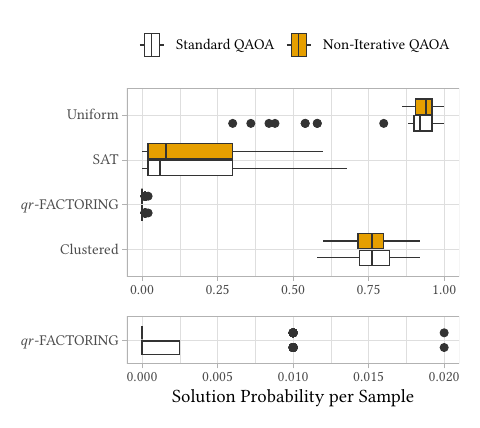}\vspace*{-1em}
    \caption{Probability of states sampled from the \QAOA circuits to encode valid solution for the various subject problems considered in this paper with a standard \QAOA approach (individual parameter optimisation for each instance), and with our non-iterative schema using a-priori
    parameters obtained from the approximated landscape for all instances.}\label{fig:circopt_vs_preopt}
\end{figure}

We explored this approach empirically (using numerical simulations; see \cref{sec:repro} for details about our reproduction package that allows researchers to directly employ our approach or inspect our implementation) for the examples introduced above in \cref{sec:application}. For every example, we randomly generated 50 instances to be solved with both \QAOA methods: standard \QAOA and \ourQAOA. The approximate optimisation landscape for the \ourQAOA was calculated based on the dataset generated earlier in \cref{sec:application}. After  optimisation, we visually verified each set of parameters for \ourQAOA resides on one of the two main extrema of the landscape. After that, the parameters were used to create a \QAOA circuit for each problem instance. This circuit then was compared with a \QAOA circuit trained on the actual instance. From both circuits, we sampled 50 potential solutions per instance (100 states were sampled per instance for \qrfactoring to more accurately capture the extremely low maximal possible success probability). \cref{fig:landscapes} shows the resulting approximate landscapes.

As for practical performance, we can see from \cref{fig:circopt_landscape} that parameters obtained by standard \QAOA are mostly located around extrema of the approximate optimisation landscapes. However, the optimisation apparently also often ends in local maxima, especially for \qrfactoring. This suggests matching success rates for standard and \ourQAOA. Indeed, we can see in \cref{fig:circopt_vs_preopt} that states sampled from \ourQAOA are valid solutions with identical (or slightly better) probability than states sampled from standard \QAOA. 
Hard combinatorial constraint satisfaction problems are known to have phase transitions between trivially under- and over-constrained instances, and the actually hard instances reside in the parameter region of this very phase transition. For \SAT, the level of under- or over-constrainedness is measured by the coefficient of variables to clauses $\alpha = \sfrac{|C|}{|V|}$, where $|C|$ and $|V|$ denote the number of clauses and variables, respectively. The phase transition of \SAT happens at $\alpha \approx 4$. For our comparison, we generated 50 instances for each value of $\alpha = 2$, $\alpha = 4$ and $\alpha = 6$. As before, the \ourQAOA approach is on par with standard \QAOA for under- ($\alpha = 2$) and over-constrained ($\alpha = 6$) \SAT instances, as well as for hard \SAT instances right in the phase transition at $\alpha = 4$. This underlines the utility of our approach not only for \enquote{average}, but also hard instances. \cref{fig:sat_resolved} visualises the outcome distribution in detail.

\begin{figure}[htbp]
    \includegraphics{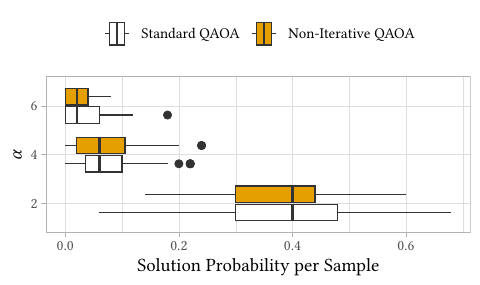}
    \caption{Probability of states sampled from the \QAOA circuits to encode a valid solution for \SAT instances of different hardness $\alpha$. The \ourQAOA variant is on par with standard \QAOA for trivially under- ($\alpha = 2$) and over-constrained ($\alpha = 6$) instances, as well as for usually hard instances in-between at $\alpha = 4$ (a comparison between standard and \ourQAOA over all \SAT instances is included in \cref{fig:circopt_vs_preopt}).}
\label{fig:sat_resolved}
\end{figure}

Judging from the above evaluation, our \ourQAOA algorithm at least 
matches standard \QAOA success probabilities, and 
occasionally shows slight advantage. However, \QAOA involves 
classical parameter optimisation for every instance, while our 
approach is more resource efficient: Only a constant number of 
quantum circuit evaluations instead of sampling the circuit in every 
iteration of the classical optimisation loop is required, and 
we avoid instance-specific classical optimisation altogether:
One single up-front classical optimisation process is required 
to infer optimal parameters from the approximated landscape per 
problem. Costly circuit evaluations in the quantum-classical 
iteration of \QAOA are replaced by classical instance sampling in 
\ourQAOA: The quantum resource is only used for instance 
optimisation, not for preparatory work.

\section{Discussion}
\label{sec:discussion}
\begin{figure*}[htbp]
    \centering
    \includegraphics{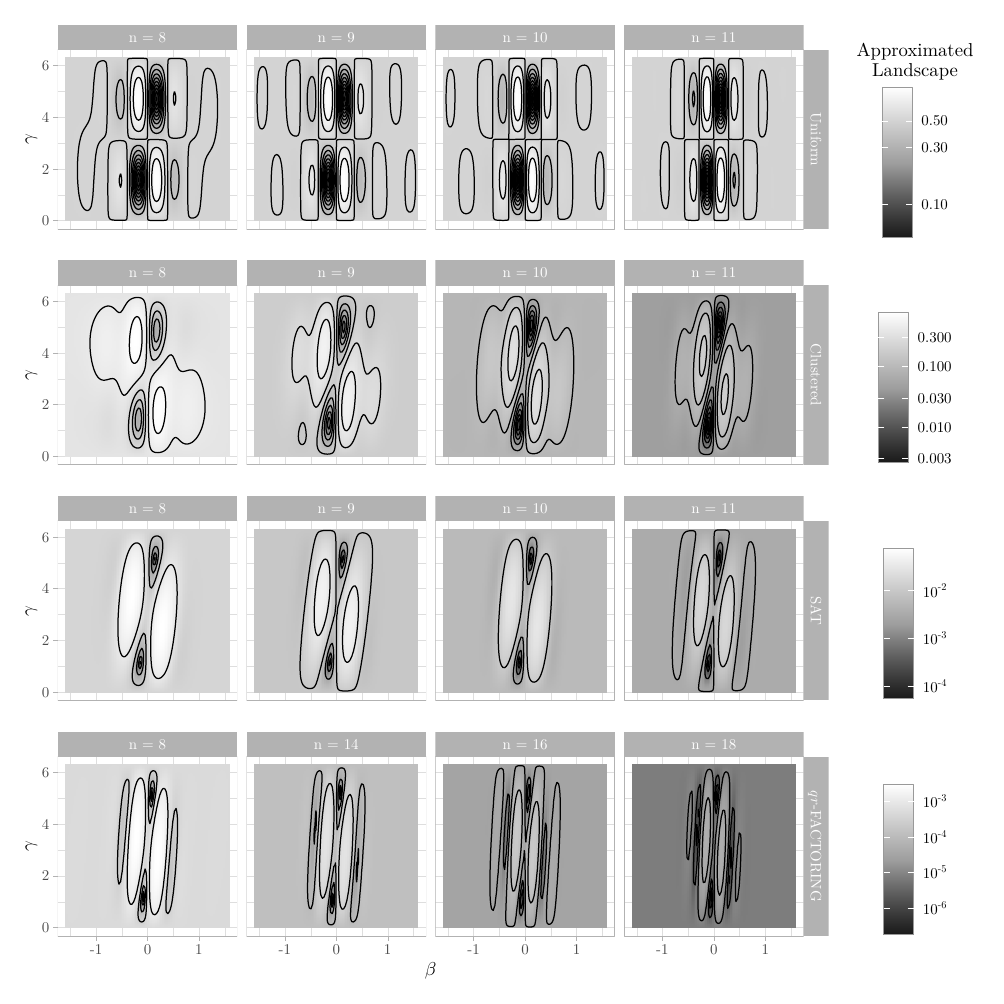}
    \caption{Overview of all approximated landscapes for the considered subject problems (100 random instances per problem and dimension). Each panel shows the 
    approximated optimisation landscape 
    \(\tilde{E}(F_{1}(\mangle, \pangle))\) obtained by \cref{thm:approx} as surrogate for the expected value of \(F_{1}(\beta, \gamma)\). Macroscopic similarities do not only arise between instances of different dimensions (left to right), but also to a certain extent across problems (top to bottom).}
    \label{fig:landscapes}
\end{figure*}
We have considered \QAOA from a mixed perspective comprising computer science and physics, and could not only establish a new 
approximation theorem for the optimisation landscape, but have also 
suggested an algorithm that translates these insights directly into useful advantages. The evaluation of both aspects in the preceding 
sections has shown that either improves upon the state of the art
in various ways.

However, our results also raise new questions.
Obvious issues include how to extend the approach
to \QAOA depths larger than one;  possible gains in post-\NISQ
systems; a more comprehensive evaluation on specific
industrial problems and larger instance sizes; and the
robustness to noise. Likewise, the question of how computational 
power that eventually arises from our algorithm is distributed 
between classical and quantum resources, and what impact this 
distribution has on the properties like runtime or solution 
quality, remains open. We need to leave addressing these
questions to future research.

For all subject problems analysed above in \cref{sec:application} 
our approximation matches the expected landscape extremely well, 
with negligibly small errors. Given that we could show the absolute 
approximation error to be bounded by $\sqrt{\Var\ab(|T|) 
\Var\ab(\overline{|c_k|^2})}$, it follows that for all problems with 
constant solution space sizes (\ie, $\Var\ab(|T|) = 0$), our 
approximation actually delivers exact results. It should be noted that this  
requires knowledge of exact values for 
$\ab\{E\ab(\overline{\#_d (k)})\}_{d=0}^n$, $\ab\{E\ab(\overline{\#_{d_1} (k) \#_{d_2} (k)})\}_{d_1, d_2 = 0}^n$ and $E\ab(\ab|T|)$, 
which is usually not practically achievable, especially for 
empirical sampling. This explains the approximation errors observed 
for all subject problems with fixed target sizes above---except for \SAT. This 
observation naturally raises questions about the relationship 
between the estimation accuracy of the target space structures and 
the resulting approximation error. Here, a trade-off between sample 
size and approximation quality is to be expected and should be 
analysed in future work. An answer to this question also could give 
interesting insight into the variance of optimisation landscapes 
between individual problem instances.

Regarding the scalability of out approach, we note that given the structural information about the target space the landscape approximation according to \cref{thm:approx} results in a evaluation of $\mathcal{O}\ab(n^2)$ terms, where $n$ is the number of qubits. Therefore, the landscape approximation itself scales efficiently with the number of qubits, obtaining the needed structural solution space information on the other hand depends on the chosen method. We demonstrated  that for this both analytical and empirical approaches can be taken. In the first case the efficiency depends on the theoretical model and the closed from or approximation arising from it. For the latter, it would be interesting to investigate the approximation robustness of \ref{thm:approx} under limited sampling. Another empirical approach where our framework could proof beneficial would be to deploy it in an online learning fashion, where the needed solution samples are gathered on the way. Although, out of scope and unaddressed in this paper, this could provide an interesting path investigate in future work. One could also be curious about the feasibility of our proposed landscape approximation for deeper circuits with an increasing number of layers. With more layers \QAOA \emph{sees} more of the graph, while it generally does not see the complete graph with $p < \log n$ \cite{farhi2020quantum,farhi2020quantumTypcal}. With \cref{thm:approx} operating on global structures, we argue that these global structures should even become more important for deeper \QAOA circuits scanning the complete qubit graph. 
 
Our optimisation landscape approximation is solely based on structural information about a problem. This allowed us to prove the existence of conjectured instance invariants that stem from observed effects like parameter clustering~\cite{brandao2018fixed,streif2020training}. The approximation allows us to obtain a smooth representation of the complete optimisation landscape. In \cref{fig:landscapes}, we collect such landscapes for the subject problems discussed in \cref{sec:application} for different dimensions of the target space $T$. This provides us with some interesting observations: On the one hand, it can be seen that macroscopic similarities exist between all problems. They share very similar high level features (with some problem specific differentiation). These macroscopic features are also present for structure-less target sets like in the case of the uniform sampling example. We thus conjecture that these features are truly problem independent and could be explained by either just the approximation theorem presented in this work, or in combination with straight forward and minimalistic examples like uniform sampled target sets. Additional structure causes some displacement and skewing. This structure can be plausibly explained by local effects like membership of a state to the target set depending on the membership of other neighbouring states. As we can additionally observe from \cref{fig:landscapes}, simply increasing the state space dimension (\ie, going from left to right on the panels for a specific problem) seems to compress the landscape along $\mangle = 0$ along the $\mangle$-axis. This effect is independent of the subject problem (\ie, does not change when traversing the panels from top to bottom in the plot). As it can also be observed in the uniform sampling example, it must be independent of structural properties. While further examining these effects could provide valuable insight into the behaviour of \QAOA optimisation landscapes, we need to leave the efforts required for such an investigation to future work, as they go beyond the scope of this paper. The approximation theorem devised in this paper will likely serve as an analytical basis for such considerations.

Focusing on two-level Hamiltonians and \QAOA circuits keeps the approximation framework simple and straightforward to use. We thus provide a well defined base framework to analyse the essential connection of problem structure and \QAOA landscapes. At the same time, it is extendable to accommodate, for instance, Hamiltonians with more complex eigenspectra or \QAOA circuits with more layers, albeit we also need to leave such efforts
to future work. However, it seems pertinent to note that the landscapes for the subject problems considered in this work match empirical results with higher level eigenspectra and deeper \QAOA circuits in their macroscopic
structure~\cite{pelofske2024short,streif2020training}. 
The same holds true for Ising model \QAOA landscapes based
on an instance-specific analytical derivation~\cite{Ozaeta:2022}.

We believe our work lays a foundation to map significant insights 
from classical theoretical computer science to quantum approaches 
along the lines of \QAOA. In the classical case, the analysis of 
solution space structures is well 
established~\cite{AnsoteguiGL12,PariLYQ04,hogg1996refining,achlioptas2011solution}. With the approximation theorem (and the
techniques introduced to derive it), we present a handle to
apply this knowledge to quantum optimisation. This provides an opportunity for future progress in understanding and 
utilising the class of algorithms initiated by \QAOA.

\section{Conclusion}\label{sec:conclusion}
Our new perspective on \QAOA from a mixed physics and theoretical 
computer science point of view allowed us to accurately approximate 
the \QAOA optimisation landscape based on the inherent structural 
properties of a problem, and prove a long-standing set of hypotheses 
about relationships between optimal \QAOA parameters for 
different instances of a problem. By directly linking the solution 
space structure of problems to their \QAOA optimisation landscapes, 
we could devise an approximation theorem that does not only provide 
structural insights, but also impacts the practical use of \QAOA.

Based on our results, we have constructed a
\ourQAOA algorithm that is more resource-efficient than
standard \QAOA for various measures. Our evaluation on five different scenarios
and subject problems has shown excellent agreement from
the theoretical and practical point of view, and provides
concrete advantages over standard \QAOA. Our new perspective
on \QAOA opens various possibilities for future research
to understand unsolved issues about quantum optimisation.

\newcommand{\TK}{\censor{TK}\xspace}
\newcommand{\WM}{\censor{WM}\xspace}
\newcommand{\programme}{\blackout{German Federal Ministry of
    Education and Research (BMBF), funding program
    \enquote{Quantum Technologies---from Basic Research to Market}}}
\newcommand{\grantoth}{\censor{\#13NI6092}}
\newcommand{\hta}{\censor{High-Tech Agenda Bavaria}}

\vspace*{1em}\textbf{Acknowledgements:} We thank Jonathan Reyersbach for 
many active discussions that helped shape the mathematical
exposition of our ideas, and acknowledge important comments from 
Lukas Schmidbauer, Simon Thelen, and Maja Franz to ascertain the 
correctness of our main theorem. We are grateful for funding from 
\programme, grant \grantoth{}. \WM also acknowledges support
by the \hta.

\begin{appendix}
\section{Reproduction Package}\label{sec:repro} 
    
To make our experiments reproducible by other researchers~\cite{reprobook:2019}, we provide the complete source 
code for the calculations performed in the paper, in form of a long-term 
stable~\cite{Mauerer:2022} \repro (link in PDF; a
\doirepro version \cite{reprPackage} 
is also available). We ascertain that the package is fully
self-contained, and does not rely on resources that may eventually 
vanish from public access. In particular, we provide code that is
easily extensible to more subject problems than considered
in the paper, and include routines to (a) perform target space 
sampling; (b) implement \ourQAOA; (c) perform numerical simulations 
to explore performance and compare with standard \QAOA.
Any raw data obtained from our simulations, together
with a complete pipeline to perform the evaluation and visualisation
presented in the paper, are included.
\end{appendix}

\bibliographystyle{plainnat}
\bibliography{main}

\end{document}